\newcommand{\SIDE}{{\rm SIDE}}
\newcommand{\Myquit}{q_{{\rm Mid}}}
\newtheorem{theorem}{Theorem}[section]
\newtheorem{lemma}[theorem]{Lemma}
\newtheorem{proposition}[theorem]{Proposition}
\newtheorem{problem}{Problem}[section]
\title{Can Walker Localize The Middle Point of A Line-segment?}
\author{
Akihiro Monde\footnotemark[1] \and
Yukiko Yamauchi\footnotemark[1] \and
Shuji Kijima\footnotemark[1] \footnotemark[2] \and
Masafumi Yamashita\footnotemark[1] 
}
\begin{document}
\maketitle
\renewcommand\thefootnote{\fnsymbol{footnote}}
\footnotetext[1]{Graduate School of Information Science and Electronic Engineering, Kyushu University}
\footnotetext[2]{Corresponding: kijima@inf.kyushu-u.ac.jp}
\renewcommand\thefootnote{\arabic{footnote}}

\begin{abstract}
 This paper poses a question about a simple localization problem. 
 The question is if 
   an {\em oblivious} walker on a line-segment 
   can localize the middle point of the line-segment in {\em finite} steps 
   observing the direction (i.e., Left or Right) and the distance to the nearest end point. 
 This problem is arisen from {\em self-stabilizing} location problems 
   by {\em autonomous mobile robots} with {\em limited visibility}, 
    that is a widely interested abstract model in distributed computing. 
 Contrary to appearances, 
   it is far from trivial if this simple problem is solvable or not, and unsettled yet. 
 This paper is concerned with three variants of the problem with a minimal relaxation, and 
   presents self-stabilizing algorithms for them. 
 We also show an easy impossibility theorem 
   for bilaterally symmetric algorithms. 

\noindent
{\bf Keywords: }
Self-Stabilization, Autonomous Mobile Robot, Computable Real, Small Space Algorithm, Choice Axiom
\end{abstract}

\section{Introduction}\label{sec:intro}
 Ani Walker is a perfect balancer. 
 He is on a long rope in the sky, and  
  he wants to get to the exactly middle point of the rope; 
  from the point he can jump down to a safety net below. 
 Two instructors stand on the both ends, 
   Master Light ($L$) is on the left-end and 
   Master Dark ($R$) is on the right-end. 
 Walker asks them a query ``where am I?'' 
 They answer ``you are standing at the midpoint'' 
   if Walker is exactly at the midpoint, 
  otherwise the master in the nearest side 
   answers ``you are located in my side distance $d$ from me.''  
 Can Walker localize the midpoint of the rope? 

 Let the rope be denoted by a closed real interval $[-D,D] \subseteq \mathbb{R}$ with a positive real $D$. 
 For a location query by Walker at $x \in [-D,D]$, 
  our oracle returns the halting state $\Myquit$ if $x=0$, 
  otherwise, 
   it returns the nearest-end $\SIDE(x) \in \{L,R\}$ and 
   the distance $d(x)$ between $x$ and $\SIDE(x)$, 
   i.e., 
    Walker observes $\SIDE(x)=R$ and $d(x) = D-x$ if $x>0$, and 
    he observes $\SIDE(x)=L$ and $d(x) = D+x$ if $x<0$. 
 Of course, he does not know neither $D$ nor $x$. 
 Unless Walker locates at the point $x=0$, 
  he deterministically decides (and moves to) the next position $x'$, 
   using $\SIDE(x)$ and $d(x)$. 
 The question is 
   if there is a way for Walker to reach at $x_t = 0$ in a finite steps $t$ 
   starting from an arbitrary $x_0 \in [-D,D]$.

 The answer is YES, 
  if Walker uses the history of the positions and the motions to the current position: 
 Firstly he visits the left-end, 
   here $x_0$ denotes the left-end. 
 Then, 
  he asks a query every one step as long as he is in the side $L$; 
  i.e., let $x_{t+1}=x_t+1$ for $t=0,1,2,\ldots$ if $\SIDE(x_t)=L$. 
 Once he observes $\SIDE(x_n) = R$ and $d(x_n)=d$, 
  he knows that the length of the path, that is $2D=n+d$. 
 Thus, the midpoint is placed at $x_n-(\frac{n+d}{2}-d)$, 
  thus he should just move to left distance $\frac{n-d}{2}$ from $x_n$.

\begin{figure}[tbp]
\begin{center}
\includegraphics[width=90mm]{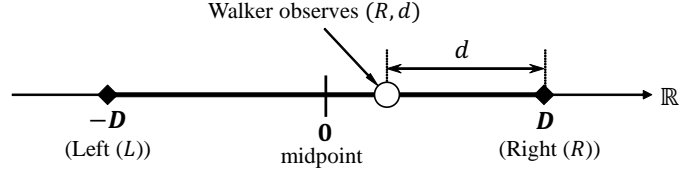}
\caption{
 A sketch of the problem. 
 Walker is on a line segment. 
 He observe the direction and the distance to the nearest end-point $(R,d)$, and  
   moves to the next place based on $(R,d)$. 
 Is there an algorithm 
   for any line-segment of finite length and any initial point 
   to localize  eventually the middle point of the line-segment 
 (see Problem~\ref{prob:original} for more precise)? 
  }
\label{Fprob}
\end{center}
\end{figure}
 Unfortunately, 
   he is too impulsive a person to recall the previous position. 
 Simply to say, 
  Walker does not have any memory about the previous motion. 
 Can he localize the midpoint of the rope?
 This is the question of this paper. 
 Precisely, 
  the question is formulated 
  as the existence of a transition map for a discrete time deterministic process, as follows. 
\begin{problem}[Basic problem]\label{prob:original}
 A real $D$ ($1 < D < \infty$), and 
  a real $x$ such that $ -D \leq  x \leq D$ are given as an instance of the problem. 
 An {\em observation function} $\phi \colon \mathbb{R} \times [-D,D] \to \mathcal{O}$, 
 where $\mathcal{O}:= (\{R,L\} \times [0,D)) \cup \{\Myquit\}$, 
  is defined by 
\begin{eqnarray}
\phi(D,x) = 
\begin{cases}
 (R,D-x)& (\mbox{if $x > 0$})\\
 (L,D+x)& (\mbox{if $x < 0$})\\
 \Myquit & (\mbox{otherwise, i.e., $x = 0$}). 
\end{cases}
\label{def:phi}
\end{eqnarray}
 For convenience, 
   let $\phi(D,x)=(\SIDE(x),d_D(x))$ when $x \neq 0$. 
 A map $f\colon \mathcal{O} \to [-D,D]$ is a {\em transition map}\index{transition map}\footnote{
 Here, 
  $x \mapsto f(\phi(D,x))$ represents a transition of Walker on the interval $[-D,D]$. 
 } 
  if 
  $f(\phi(D,x))-x$ only depends on $d_D(x)$ and $\SIDE(x)$ 
    (but independent of $D$ or $x$), and  
  $f(\phi(D,x)) \in [-D,D]$ for any $x \in [-D,D]$. 
 The goal is to design a transition map 
  $f \colon \mathcal{O} \to [-D,D]$ 
  for which an integer $n$ ($0 \leq n <\infty$) exists for any real $D$ ($1<D<\infty$) and $x_0 \in [-D,D]$ 
  such that $x_n=0$ 
   where $x_{i+1}=f(\phi(D,x_i))$ for $i=0,1,2,\ldots$. 
 More precisely, 
  let $\Psi \colon \mathbb{R} \times [-D,D] \to \mathbb{Z}_{\geq 0}$ be a {\em potential} function defined by 
\begin{eqnarray}
  \Psi(D,x) = \min\{ n \in  \mathbb{Z}_{\geq 0} \mid x_0=x,\ x_n=0,\ x_{i+1}=f(\phi(D,x_i))\}
  \label{def:Psi}
\end{eqnarray}
 for any instance $D$ ($1<D<\infty$) and $x \in [-D,D]$, 
  then $\Psi(D,x)$ needs to be bounded (may depend on $D$). 
\end{problem}
 Here we make a brief remark. 
 Let $$M((\SIDE(x),d(x))) := f((\SIDE(x),d(x)))-x$$ for $(\SIDE(x),d(x)) \in \mathcal{O}$, 
 then $M((\SIDE(x),d(x)))$ represents the ``motion'' when Walker observes 
  the direction $\SIDE(x)$ and the distance $d(x)$. 
 Since only $\SIDE(x)$ and $d(x)$ are available to him,  
  $M((\SIDE(x),d(x)))$ (i.e., $f(\phi(D,x))-x$) must depend only on $\SIDE(x)$ and $d(x)$. 
 The potential function 
  $\Psi(D,x)$ represents the number of steps 
  to localize the midpoint by the algorithm given by the transition map~$f$.

\subsection{Our results}
 Contrary to its simple appearance of the problem, 
  it is far from trivial if the problem is solvable or not, and 
  the question will be left as unsettled in this paper. 
 This paper is concerned with 
   three variants of the problem minimally relaxing some constraints, and 
   shows the solvability of them by giving algorithms, respectively. 

 The first variant is a {\em convergence} version of the problem, 
   which relaxes the midpoint condition 
    such that Walker is required to reach {\em around} the midpoint, 
    instead of exactly at the midpoint (see Section~\ref{sec:converge} for detail). 
 We give an algorithm, which uses a choice function. 

 The second variant is about a prior knowledge on the length of the line segment. 
 The problem is (sometimes) solvable 
   if we assume some restriction on the length of a given line segment, 
   instead of arbitrary real (Section~\ref{sec:restriction}).
 For instance, 
   we prove that the problem is solvable 
   if the length of the line-segment is restricted to be an algebraic real (Section~\ref{sec:algebraic}). 
 Since the algebraic reals are at most countable many, 
  it is natural to ask if there is a solvable case 
   where the length is restricted, but {\em uncountably} many. 
 We affirmatively answer the question with a simple example  (Section~\ref{sec:epsilon-omit}), 
  giving a technique different from Section~\ref{sec:algebraic} to solve it. 
 In fact, 
  the technique presented in Section~\ref{sec:epsilon-omit} is also applied, 
   with a carefully crafted modification, 
  to another interesting case: 
  the problem is solvable 
   if the length of the line-segment is assumed that 
   its fractional part is NOT in the {\em Cantor set} (Section~\ref{sec:nonCantor}). 
 We remark that the Lebesgue measure of the Cantor set is known to be zero, 
  meaning that  
  this example shows a bit more enhancement of solvable cases 
   from an uncountable set in Section~\ref{sec:epsilon-omit} to {\em almost everywhere} in Section~\ref{sec:nonCantor}. 

 The third variant allows Walker to have a small memory, 
   instead of no-memory (Section~\ref{sec:memory}). 
 As we stated, 
  if Walker uses some history of the previous position, then the problem is easily solved. 
 We show that {\em only a single-bit} of memory is sufficient to solve the problem.  
 The algorithm using some parity tricks is simpler, and could be more practical, 
   than the other cases which use choice functions. 

 Above arguments show that the problem is solvable with some reasonably minimal relaxations, 
    nevertheless we conjecture that the original problem is unsolvable. 
 Concerning the impossibility, 
   we also give an easy impossibility theorem, 
   where we assume that an algorithm is restricted to be mirror symmetric at the midpoint 
   (Section~\ref{sec:symmetric}).  
 We also present some interesting versions unsettled (Section~\ref{sec:conclude}). 

\subsection{Background: computability of autonomous mobile robots}
 This paper is originally motivated by understanding the intractability of 
  {\em self-stabilization} of {\em autonomous oblivious mobile robots} with {\em limited visibility}. 
 Here, we briefly explain the background. 
 Computability by autonomous mobile robots 
   is widely interested in distributed computing,  
   with many real applications 
    including wireless sensor networks, molecular robotics, and so on. 
 In the earliest work on the topic, 
  Suzuki and Yamashita~\cite{suz99} revealed that 
   no uniform algorithm can break the symmetry of two semi-synchronous anonymous robots on a 2-D plane. 
 In a recent work, 
  Yamauchi et al.~\cite{yama17} characterized the solvability of the plane formation problem, 
   that is the first characterization on the symmetry breaking of autonomous mobile robots in 3-D space. 

 {\em Self-stabilization} is an important subject in distributed computing, 
   particularly from the view point of fault tolerance. 
 An algorithm is self-stabilizing 
   if it is proved eventually to lead a desired state (solution), from any initial state of the computing entity. 
 Mainly from the view point of self-stabilization, 
   robots in the literature have few memory, or often no memory (called {\em oblivious}), 
   and hence they are required to solve problems from ``geometric'' information observed. 
 Designing self-stabilizing distributed algorithms for mobile autonomous robots 
  has been intensively investigated 
  on various problems such as 
  {\em pattern formation}~\cite{def02,fuj15,suz99,yam10,yama17,yama13}, 
  {\em gathering}~\cite{and99,and95,coh05,coh06,flo05}, 
  {\em self-deployment}~\cite{bar11,coh08,eft14,eft13,flo08,flo10,shibata16} including 
    {\em scattering} and 
    {\em coverage}. 

 Mainly from the theoretical tractability, 
  robots are often assumed in the literature 
  to have (infinitely) large visibility such that they respectively observe the whole robots,  
 which  corresponds to a situation that sensor nodes are congested with in there sensing area 
  in the practical sense. 
 However, 
   real sensor nodes often do not have enough power of sensing, 
   and {\em limited visibility} is considered as a more practical model. 
 The limitation of the visibility causes many intractability, and 
  there are some developments \cite{and99,and95,def02,eft14,eft13,flo08,flo05,yama13} on the model, 
   but not so many. 
 In fact, the theoretical difficulty of the model is not well understood, yet.  

 This paper is motivated 
  by the question why it is difficult to prove the intractability of 
  a self-stabilization of autonomous mobile robots with {\em limited visibility}. 
 Problem~\ref{prob:original} may be an essence of the difficulty, 
  there is a single robot in a 1-D space and 
  the task is to localize the midpoint  
    where the visibility range of the robot is exactly half the length of the line segment. 

\paragraph{Existing works about the autonomous mobile robots with {\em limited visibility}}
 Closely related problems, or a direct motivation of the paper, 
    are scattering or coverage over a line or a ring 
   by autonomous mobile robots with limited visibility~\cite{coh08,def02,eft14,eft13,flo08}. 
 Cohen and Peleg~\cite{coh08} were concerned with 
   spreading of autonomous mobile robots over a line (1-D space)
   where a robot observes the nearest neighboring robot in {\em each} of left and right side. 
 They presented a local algorithm leading to equidistant spreading on a line, and 
   showed convergence and convergence rate for fully synchronous (FSYNC) and semi-synchronous (SSYNC) models. 
 They also gave an algorithm to solve exactly 
    when each robot has enough size of memory, that is linear to the number of robots. 
 Eftekhari et al.~\cite{eft13}  
   studied the coverage of a line segment  
    by autonomous mobile robots placing grid points with minimum visibility to solve the problem. 
 They gave two local distributed algorithms, 
   one is for oblivious robots and it terminates in time quadratic to the number of robots, while 
   the other is for robots with a constant memory and it terminates in linear time. 
 Eftekhari et al.~\cite{eft14} 
    showed the impossibility of the coverage of a line segment (whose length is integral) by robots with limited visibility in SSYNC model 
    when robots do not share left-right direction. 
 Whereas, they showed that it is solvable even in asynchronous (ASYNC) model 
  if robots 
     shares left-right direction, 
     have a visibility range strictly greater than mobility range, and 
     know the size of visibility range. 

 Flocchini et al.~\cite{flo08} 
    were concerned with equidistant covering of a {\em circle} by oblivious robots with limited visibility. 
 They showed the impossibility of exact solution if they do not share a common orientation of the ring. 
 They also showed the possibility 
    by oblivious asynchronous robots with almost minimum visibility 
    when robots share a common orientation. 
 Defago and Konagaya~\cite{def02} 
   were concerned with circle formation {\em in 2-D space} 
    by oblivious robots with limited visibility, 
    where robots do not know the size of their visibility range. 
 In the paper, 
   they also dealt with equidistant covering of a circle, and 
   gave an algorithm for convergence. 

\subsection{Organization}
 This paper is organized as follows. 
 Sections~\ref{sec:converge}--~\ref{sec:memory} 
  respectively show the solvability of three variants of Problem~\ref{prob:original}, 
   minimally relaxing some constraints. 
 Section~\ref{sec:converge} is concerned with a {\em convergence} version of the problem. 
 Section~\ref{sec:restriction} 
   assumes {\em a prior knowledge} on the length of the line segment, 
   instead of arbitrary real.
 Section~\ref{sec:memory} 
   shows that the problem is solvable if we have a {\em single-bit} of memory.  
 Concerning the impossibility, 
  Section~\ref{sec:symmetric} gives an easy impossibility theorem, 
   where we assume that an algorithm is restricted to be mirror symmetric at the midpoint. 
 Section~\ref{sec:conclude} concludes this paper, 
  with some open problems. 

\section{Relaxation 1: Convergence }\label{sec:converge}
 This section shows that the {\em convergence} version of Problem~\ref{prob:original} is solvable. 
 Precisely, we are concerned with the following problem  
\begin{problem}[Convergence]\label{prob:converge}
 A real $D$ ($1 < D < \infty$), \underline{a real $\epsilon$ ($0 < \epsilon \leq D$)}
  and a real $x \in [-D,D]$ are given as an instance of the problem. 
 The observation function (of Problem~\ref{prob:converge}) $\phi \colon \mathbb{R} \times \mathbb{R} \times [-D,D] \to \mathcal{O}$ 
  is given by 
\begin{eqnarray}
\phi(D,\epsilon,x) = 
\begin{cases}
 (R,D-x)& (\mbox{if $x > \epsilon$})\\
 (L,D+x)& (\mbox{if $x < -\epsilon$})\\
 \Myquit & (\mbox{otherwise, i.e., $ -\epsilon \leq x \leq \epsilon$}). 
\end{cases}
\end{eqnarray}
 A map $f$ is a transition map if $f(\phi(D,x))-x$ depends only on $d_D(x)$ and $\SIDE(x)$
 (but independent of $D$, $x$, \underline{or $\epsilon$}), and 
  $f(\phi(D,x)) \in [-D,D]$ for any $x \in [-D,D]$. 
 The goal of the problem is to design a transition map 
  $f \colon \mathcal{O} \to [-D,D]$ 
  for which an integer $n$ ($0 \leq n <\infty$) exists 
  for any reals $D$ ($1<D<\infty$), $\epsilon$ ($0 < \epsilon \leq D$) and $x_0 \in [-D,D]$ 
  such that $-\epsilon \leq x_n \leq \epsilon$ 
   where $x_{i+1}=f(\phi(D,\epsilon,x_i))$ for $i=0,1,2,\ldots$. 
\end{problem}
 The condition that $f(\phi(D,x))-x$ is independent of $\epsilon$ 
   corresponds to the situation that $\epsilon$ is not available to Walker. 
 Thus, 
   an algorithm is required two conflicting functions: 
 The step-lengths are (preferably) decreasing, otherwise Walker misses the small interval $[-\epsilon,\epsilon]$. 
 On the other hand, 
    the total length of the moves should diverge as increasing the number of steps, 
  otherwise Walker stops before reaching at the midpoint 
    when $D$ is larger than the upper bound of the total length of the moves. 

 The rest of this section shows the following theorem. 
\begin{theorem}\label{thm:converge}
 Problem~\ref{prob:converge} is solvable. 
\end{theorem}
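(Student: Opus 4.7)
The plan is to construct a bilaterally symmetric transition, determined by a single step function $m \colon [0,\infty) \to (0,\infty)$ via $f(\phi(D,x)) = x - m(d)$ when $\SIDE(x)=R$ and $f(\phi(D,x)) = x + m(d)$ when $\SIDE(x)=L$. Two conflicting properties are needed: the cumulative motion along any one-sided orbit must diverge so that Walker crosses the midpoint for every $D$, and the individual step sizes must eventually shrink below any $\epsilon > 0$ so that Walker lands inside $[-\epsilon,\epsilon]$ on crossing rather than overshooting. Since Walker observes only $d$, and not $D$ or $\epsilon$, both properties must be engineered into $m$ uniformly.

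A first attempt is to fix an increasing unbounded sequence $0 = a_0 < a_1 < \cdots$ with $a_{k+1}-a_k \to 0$ and $\sum_k (a_{k+1}-a_k)=\infty$ (for instance, partial sums of the harmonic series), and set $m(d) = a_{k+1}-d$ when $d \in [a_k, a_{k+1})$. Under this rule, any starting $d_0$ is driven onto the grid $\{a_k\}$ after a single step; the subsequent step sizes shrink to zero while their partial sums diverge; and so, for any $D$, Walker eventually crosses the midpoint with overshoot at most the current grid spacing $a_{k+1}-a_k$.

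The subtlety is the post-crossing behavior: the new observation $d' = 2D - d - m(d)$ on the opposite side generically fails to lie on the grid, and a naively symmetric extension of the rule can trap Walker in a 2-cycle straddling the midpoint whose amplitude exceeds $\epsilon$. This is where the choice function enters: we refine $m$ using a choice of representatives from the equivalence classes on $(0,\infty)$ under $d \sim d'$ iff $d-d' \in \mathbb{Q}$, and within each class let the step size depend on the rational offset to the chosen representative in a way that breaks symmetry across the midpoint. The effect is that after each crossing, the successor $d'$ falls into a strictly finer sub-regime and the next step is strictly smaller than the previous one, so no 2-cycle can persist.

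The main obstacle is verifying this contraction quantitatively for \emph{every} $D$ and $\epsilon$. I would use a potential argument on the one-sided excursion past the midpoint: show that consecutive excursions form a strictly decreasing non-negative sequence bounded above by the most recently used step size, and that the step sizes along the trajectory form a null sequence. Together these force the excursions to drop below any prescribed $\epsilon$ in finitely many steps, so Walker enters $[-\epsilon,\epsilon]$ in finite time, proving Theorem~\ref{thm:converge}.
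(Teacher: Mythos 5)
There is a genuine gap. You correctly isolate the two conflicting requirements (step sizes must form a null sequence, yet their sum must diverge) and you correctly diagnose that the danger lies in what happens after a crossing, but the mechanism you propose to resolve it is never constructed: ``let the step size depend on the rational offset to the chosen representative in a way that breaks symmetry'' and ``the successor $d'$ falls into a strictly finer sub-regime'' are exactly the statements that constitute the proof, and they are asserted rather than established. Note also that your first-attempt grid $\{a_k\}$ with $m(d)=a_{k+1}-d$ fails for a reason you do not flag: the counter $k$ is stored in the \emph{value} $d=a_{k+1}$, and any subsequent move that shifts $d$ by a constant (e.g.\ the return from the right side) throws $d$ into some $[a_j,a_{j+1})$ with $j\ll k$, resetting the counter, so the step sizes never tend to zero along the actual trajectory. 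What is needed is a partition of the possible distances into classes indexed by $k$ that is (i) decidable from $d$ alone, (ii) invariant under the translations the algorithm itself performs, and (iii) incremented by the designated step. Your $\mathbb{Q}$-coset idea gestures at translation invariance but supplies no such indexed structure, and your final ``potential argument'' paragraph explicitly defers the two facts (excursions decrease, step sizes are null) that would have to be proven.

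For comparison, the paper's proof does not need a choice function at all and is asymmetric in a way that sidesteps your 2-cycle concern entirely. On the left side the step is $1/\pi_{k+1}$ when $d\in\Delta_k$, where $\Delta_k$ is the set of rationals with denominator exactly $\pi_1\cdots\pi_k$ in lowest terms; these classes are pairwise disjoint, closed under adding integers, and adding $1/\pi_{k+1}$ maps $\Delta_k$ into $\Delta_{k+1}$. On the right side the step is the constant $-1$, which returns Walker to the left side with $d$ shifted by an integer amount (plus the last prime reciprocal), hence with the counter still incremented. If $d$ lies in no $\Delta_k$, Walker resets to the left end, where $d=0\in\Delta_0$. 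Monotonicity of $k$ plus divergence of $\sum 1/\pi_i$ then gives a crossing of $-\epsilon$ with overshoot at most $1/\pi_k<2\epsilon$. If you want to complete your own route, the missing ingredient is precisely such a translation-invariant, self-incrementing indexed partition of $\mathbb{Q}$ (or of the reachable distances); without it the claimed contraction of excursions has no basis.
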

\subsection{Preliminary}
 As a preliminary step of the proof of Theorem~\ref{thm:converge}, 
  we briefly remark the following three propositions on the {\em reciprocal of primes}.
 Let $\mathbb{P}$ denote the whole set of prime numbers, and 
 let $\pi_i \in \mathbb{P}$ $(i=1,2,3,\ldots)$ denote the $i$-th smallest prime number, 
 i.e., $\pi_1=2$, $\pi_2=3$, $\pi_3=5$, $\pi_4=7$, \ldots. 
  
 For convenience of the later argument, let 
\begin{equation}
  \Delta_k = 
   \left\{\frac{\pm n}{\prod_{i=1}^{k}{\pi_i}}\ \middle|\ 
    n \in \mathbb{Z}_{>0} \mbox{ such that } 
 \pi_i \nmid n\ (\forall i \in \{1,2,\ldots,k\})
\right\}
\end{equation}
 where $a \nmid b$ denotes for $a,b \in \mathbb{Z}_{>0}$ that $a$ is not a divisor of $b$, i.e., $b/a \not\in \mathbb{Z}$. 
  for each $k \in \mathbb{Z}$.
 We also define $\Delta_0 = \{0\}$, for convenience.
 We make three remarks on~$\Delta_k$. 
\begin{proposition}\label{prop0}
$\Delta_i \cap \Delta_j = \emptyset$ when $i \ne j$.
\qed\end{proposition}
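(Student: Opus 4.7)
The plan is to observe that, for $k \ge 1$, an element of $\Delta_k$ is just a rational written in lowest terms with denominator equal to the primorial $P_k := \prod_{i=1}^{k}\pi_i$; uniqueness of the reduced form then forces disjointness.

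First I would note the simple equivalence: the stipulation that $\pi_i \nmid n$ for every $i \in \{1,\dots,k\}$ says precisely that $n$ shares no prime factor with $P_k$, i.e.\ $\gcd(n, P_k)=1$. Hence, for $k\ge 1$, every element of $\Delta_k$ is of the form $\pm n/P_k$ with $n$ a positive integer coprime to $P_k$, so the fraction is already in lowest terms and is in particular nonzero.

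Next I would handle the case $i=0$ separately: since $\Delta_0 = \{0\}$ while every element of $\Delta_j$ with $j\ge 1$ is nonzero (because $n>0$), we immediately get $\Delta_0 \cap \Delta_j = \emptyset$ for $j \ge 1$. For the main case $1 \le i < j$, assume for contradiction that some $x$ lies in $\Delta_i \cap \Delta_j$. Write $x = \pm n/P_i = \pm m/P_j$ with $\gcd(n,P_i)=\gcd(m,P_j)=1$. Then cross-multiplying gives $m P_i = \pm n P_j$ (same sign on both sides), so $m = \pm n \cdot (P_j / P_i) = \pm n \prod_{\ell=i+1}^{j}\pi_\ell$. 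The right-hand side is divisible by $\pi_{i+1}$, contradicting $\pi_{i+1}\nmid m$.

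This argument is essentially a one-line fact about unique reduced denominators, so I do not expect any genuine obstacle; the only care needed is to keep the $k=0$ case straight, since the defining condition is vacuous there and $\Delta_0$ is declared separately as $\{0\}$.
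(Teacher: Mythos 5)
Your proof is correct, and since the paper states this proposition without proof (treating it as immediate from the definition), your argument via uniqueness of the reduced denominator $\prod_{i=1}^k \pi_i$ is exactly the intended justification. The separate treatment of $\Delta_0=\{0\}$ and the cross-multiplication step for $1\le i<j$ are both sound.
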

\begin{proposition}\label{coro2}
 If $\delta \in \Delta_k$ $(k \in \mathbb{Z}_{\geq 0})$ and $m \in \mathbb{Z}$ 
  then $\delta+m \in \Delta_{k}$. 
\qed\end{proposition}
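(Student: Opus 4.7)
The plan is to verify the claim directly from the canonical form of elements in $\Delta_k$. Writing $P_k := \prod_{i=1}^{k} \pi_i$, any $\delta \in \Delta_k$ (with $k \geq 1$) admits a representation $\delta = \pm n / P_k$ for some $n \in \mathbb{Z}_{>0}$ satisfying $\pi_i \nmid n$ for every $i \in \{1,\ldots,k\}$; equivalently, $\gcd(n, P_k) = 1$. I want to show that shifting by any integer $m$ yields another element of the same $\Delta_k$.

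First, I would compute
\begin{equation*}
\delta + m \;=\; \frac{\pm n + m P_k}{P_k}
\end{equation*}
and set $n' := \pm n + m P_k \in \mathbb{Z}$. The key observation is that $n' \equiv \pm n \pmod{P_k}$, so $\gcd(|n'|, P_k) = \gcd(n, P_k) = 1$, which is to say $\pi_i \nmid |n'|$ for every $i \in \{1,\ldots,k\}$. To put $\delta + m$ into the canonical form $\pm |n'|/P_k$ with $|n'| \in \mathbb{Z}_{>0}$, one further needs $n' \neq 0$; but $n' = 0$ would force $P_k \mid n$, contradicting $\gcd(n, P_k) = 1$ whenever $k \geq 1$ (so that $P_k \geq 2$). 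Hence $\delta + m = \pm |n'|/P_k$ with $|n'|$ positive and coprime to $P_k$, which is exactly the defining condition of $\Delta_k$.

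The edge case $k = 0$ (with $\Delta_0 = \{0\}$) is either trivial, holding only for $m = 0$, or vacuous depending on how one reads the convention; either way it poses no real difficulty. The main, and essentially only, point requiring attention is the minor bookkeeping of signs so that the argument works uniformly whether $\delta$ is positive or negative and whether $\pm n + m P_k$ turns out to be positive or negative. The number-theoretic substance is just the elementary fact that adding a multiple of $P_k$ to $n$ preserves the residue class, and hence coprimality, modulo $P_k$; no genuine obstacle arises.
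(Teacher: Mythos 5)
Your proof is correct; the paper states this proposition without proof, and your direct computation --- writing $\delta=\pm n/P_k$ and observing that $\pm n+mP_k$ is nonzero and remains coprime to $P_k=\prod_{i=1}^{k}\pi_i$ --- is exactly the intended justification. Your remark that the statement is literally false for $k=0$ and $m\neq 0$ (since the paper's convention sets $\Delta_0=\{0\}$) is a fair catch, but harmless, as the proposition is only ever invoked in the proof of Theorem~\ref{thm:converge} for sets $\Delta_{k+1}$ with $k+1\geq 1$.
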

\begin{proposition}\label{coro1}
 If $\delta \in \Delta_k$ $(k \in \mathbb{Z}_{\geq 0})$ 
 then $\delta + \frac{1}{\pi_{k+1}} \in \Delta_{k+1}$. 
\qed\end{proposition}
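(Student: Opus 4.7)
The plan is to work directly with the defining form of $\Delta_k$. Write $\delta = \pm n / \prod_{i=1}^{k} \pi_i$ with $n \in \mathbb{Z}_{>0}$ satisfying $\pi_i \nmid n$ for every $i \in \{1,\ldots,k\}$, and bring $\delta + 1/\pi_{k+1}$ to the common denominator $\prod_{i=1}^{k+1} \pi_i$:
\begin{equation*}
\delta + \frac{1}{\pi_{k+1}} = \frac{\pm n\,\pi_{k+1} + \prod_{i=1}^{k}\pi_i}{\prod_{i=1}^{k+1}\pi_i}.
\end{equation*}
Call the numerator $N$. I then want to show that $N\neq 0$ and $\pi_i \nmid |N|$ for every $i \in \{1,\ldots,k+1\}$; this will exhibit $\delta + 1/\pi_{k+1}$ in the required form for $\Delta_{k+1}$.

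Nonvanishing is immediate: if $N=0$ then $\pi_{k+1}$ would divide $\prod_{i=1}^{k}\pi_i$, impossible since $\pi_{k+1}$ is distinct from every $\pi_i$ with $i\leq k$. For the coprimality check I would split into two cases. For $i \in \{1,\ldots,k\}$, the term $\prod_{j=1}^{k}\pi_j$ is divisible by $\pi_i$, while $n\pi_{k+1}$ is not, because $\pi_{k+1}\neq \pi_i$ and by hypothesis $\pi_i \nmid n$. Hence $N$ is the sum of a multiple of $\pi_i$ and a non-multiple of $\pi_i$, and therefore is itself not divisible by $\pi_i$. For $i = k+1$, the term $\pm n\pi_{k+1}$ is divisible by $\pi_{k+1}$, while $\prod_{j=1}^{k}\pi_j$ is not (again because $\pi_{k+1}$ differs from every $\pi_j$ with $j\leq k$), so $\pi_{k+1}\nmid N$.

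There is no real obstacle here: the argument is almost entirely a matter of unfolding the definitions and invoking the distinctness of the primes $\pi_1,\ldots,\pi_{k+1}$. The only mildly delicate point is the "sum of multiple and non-multiple" step, which relies crucially on $\pi_{k+1}$ being a different prime from the $\pi_i$ with $i\leq k$; if one tried to add $1/\pi_j$ for a prime $\pi_j$ already appearing in the denominator of $\delta$, the conclusion would fail. Once the numerator $N$ has been shown to be a nonzero integer coprime to each of $\pi_1,\ldots,\pi_{k+1}$, the displayed fraction is already in the canonical form required by the definition of $\Delta_{k+1}$, and the proposition follows.
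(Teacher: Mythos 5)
Your argument is correct and is exactly the computation the paper has in mind: the paper states Proposition~3 (and its two companions) as unproved remarks with an immediate \qed, so there is no authorial proof to compare against, and your unfolding of the definition --- clearing denominators to $N=\pm n\,\pi_{k+1}+\prod_{i=1}^{k}\pi_i$ and checking $N\neq 0$ and $\pi_i\nmid N$ for $i\le k+1$ via ``multiple plus non-multiple'' --- is the intended one. The only loose end is the case $k=0$: the paper defines $\Delta_0=\{0\}$ separately, so $\delta=0$ does not admit the representation $\pm n/\prod_{i=1}^{k}\pi_i$ with $n\in\mathbb{Z}_{>0}$ that your proof opens with; you should dispose of that case by hand (here $\delta+\tfrac{1}{\pi_1}=\tfrac12\in\Delta_1$ with $n=1$) before assuming $k\ge 1$.
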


 We also remark the following fact, 
   which is easily derived from the classical fact due to due to Euler~\cite{Euler1737}, 
  that the sum of the reciprocals of all prime numbers diverges. 
\begin{proposition}\label{prop1}
 $\sum_{i=j}^{\infty}{\frac{1}{\pi_i}} = \infty$ for any finite $j \in \mathbb{Z}_{> 0}$.
\qed\end{proposition}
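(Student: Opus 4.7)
The plan is to reduce the claim to Euler's classical theorem, which the excerpt has just cited. First I would invoke Euler~\cite{Euler1737} to obtain $\sum_{i=1}^{\infty} 1/\pi_i = \infty$. Then I would simply split
\[
\sum_{i=j}^{\infty} \frac{1}{\pi_i} \;=\; \sum_{i=1}^{\infty} \frac{1}{\pi_i} \;-\; \sum_{i=1}^{j-1} \frac{1}{\pi_i},
\]
observe that the second summand on the right is a finite sum of $j-1$ positive reals and hence bounded, and conclude that the left-hand side inherits the divergence of the first summand. The underlying general fact is that removing any finite initial segment of a series of nonnegative terms preserves divergence; the case $j=1$ is immediate since the subtracted sum is empty.

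There is essentially no obstacle: the proposition is an immediate corollary of Euler's theorem, and I would present it as a single-line proof in the paper. If a self-contained derivation were desired instead of a citation, the standard route is through the Euler product $\prod_{p \in \mathbb{P}} (1-1/p)^{-1} = \sum_{n=1}^{\infty} 1/n$; taking logarithms, bounding $-\ln(1-1/p)$ by $1/p + 1/(p(p-1))$, and using the fact that $\sum_{p \in \mathbb{P}} 1/p^2$ is dominated by $\sum_{n\ge 1} 1/n^2$ (hence convergent) isolates the divergence of $\sum_{p \in \mathbb{P}} 1/p$. But this detour seems unnecessary given the citation already in place.
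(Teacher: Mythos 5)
Your proof is correct and matches the paper's approach exactly: the paper also treats the proposition as an immediate consequence of Euler's theorem on the divergence of $\sum_{p \in \mathbb{P}} 1/p$, since dropping a finite initial segment of a nonnegative series cannot affect divergence. The optional self-contained derivation via the Euler product is fine but, as you note, unnecessary given the citation.
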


\subsection{Proof of Theorem~\ref{thm:converge}}
 Now, we prove Theorem~\ref{thm:converge}. 
\begin{proof}[Proof of Theorem~\ref{thm:converge}]
 The proof is constructive. 
 We define a transition map $f\colon\mathcal{O} \to [-D,D]$ to solve Problem~\ref{prob:converge} by 
\begin{eqnarray*}
 f((L,d))& =& 
\begin{cases}
 x+\dfrac{1}{\pi_{k+1}} 
 & (\mbox{if $d \in \Delta_k$  for some $k \in \mathbb{Z}_{\geq0}$}\footnotemark[2]), \\[2ex]
 x-d 
 \quad (=-D) 
 & (\mbox{otherwise, i.e., $ d \not\in \Delta_k$  for any $k \in \mathbb{Z}_{\geq0}$}), 
\end{cases} \\
 f((R,d))& =& x-1,   \\
f(\Myquit)&=&x
\end{eqnarray*}
\footnotetext[2]{
 Notice that $k$ is uniquely determined by Proposition~\ref{prop0}, if it exists.
 }in each case of $\phi(D,x) = (L,d)$, $(R,d)$ or $\Myquit$
  for any $x \in [-D,D]$ (see also Algorithm~\ref{alg:converge} and Figure~\ref{Falg1R}). 
 It is not difficult to observe that 
   $f$ is a transition map (recall Problem~\ref{prob:converge}). 
 Then, we show for any $x_0 \in [-D,D]$ that 
  there is a finite $n \in \mathbb{Z}_{\geq 0}$ such that $-\epsilon \leq x_n \leq \epsilon$ 
  where $x_t = f(\phi(D,x_{t-1}))$ for $t=1,2,\ldots$. 
 For convenience, 
  let $(\SIDE(t),d(t))=\phi(D,x_t)$\footnote{
  In the rest of the paper, 
   we use $\SIDE(t)$ and $d(t)$ 
    respectively as abbreviations of $\SIDE(x_t)$ and $d(x_t)$, for readability. 
  }. 

 Firstly, we observe that if $\SIDE(t)=R$, then 
   there exists $t'$ ($t'>t$) such that $\SIDE(t')=L$, or $\phi(D,x_{t'})=\Myquit$, 
   since the sum of $-1$'s diverges (to $-\infty$). 
 Secondly, we observe that if $\SIDE(t)=L$ and $d(t) \not\in \Delta_k$ for any $k=0,1,\ldots$, 
  then $\SIDE(t+1)=L$ and $d(t+1)=0 \in \Delta_0$. 
 Thus, without loss of generality, 
  we may assume that $\SIDE(0)=L$ and $d(0) \in \Delta_k$ for some $k=0,1,\ldots$, 
  where notice that $k$ is uniquely determined by Proposition~\ref{prop1}. 

 Next, 
   we claim that 
   if $t,t'$ ($t<t'$) satisfies $\SIDE(t)=\SIDE(t')=L$ and $d(t) \in \Delta_k$, 
   then $d(t') \in \Delta_{k'}$ and $k' > k$, 
   i.e., the index $k$ of $\Delta_k$ is monotone increasing for time $t$. 
 If $\SIDE(t+1)=L$, then it is not difficult to see that $d(t+1) =d(t)+1/\pi_{k+1}$ by the definition of $f$. 
 Proposition~\ref{coro1} (and Proposition~\ref{prop0}) implies that $d(t+1) \in \Delta_{k+1}$. 
 If $\SIDE(t+1)=R$, then $x_{t+2}=x_{t+1}-1=x_t+1/\pi_{k+1}-1$ by the definition of $f$. 
 Since $1/\pi_{k+1} \leq 1$, $x_{t+2} < x_t$ holds, and hence $\SIDE(t+2)=L$. 
 Furthermore, notice that $d(t+2) = d(t)+1/\pi_{k+1}-1$, 
  thus Proposition~\ref{coro1} and~\ref{coro2} (and Proposition~\ref{prop0}) imply that $d(t+2) \in \Delta_{k+1}$. 
 By the arguments we obtain the claim. 
 
 Now, it is not difficult to observe that 
  Walker eventually gets in $\Myquit$. 
 Since 
  the index $k$ of $\Delta_k$ is non-decreasing for time $t$, 
  suppose $\SIDE(t)=L$ and $d(t) \in \Delta_k$ where $k$ satisfies $1/\pi_k < 2\epsilon$. 
 Then, there is $n$ $(n > t)$ such that $x_n \geq -\epsilon$ 
  since $\sum_{j=k}^{\infty} 1/\pi_j$ diverges by Proposition~\ref{prop1}. 
 Suppose for convenience that $\SIDE(n-1)=L$, i.e., $x_{n-1}<-\epsilon$. 
 Then, 
  $x_n \leq x_{n-1}+1/\pi_k < x_{n-1} + 2\epsilon < \epsilon$.  
 This implies $-\epsilon \leq  x_n \leq \epsilon$. 
\end{proof}

\begin{figure}[tbp]
\begin{center}
\includegraphics[width=60mm]{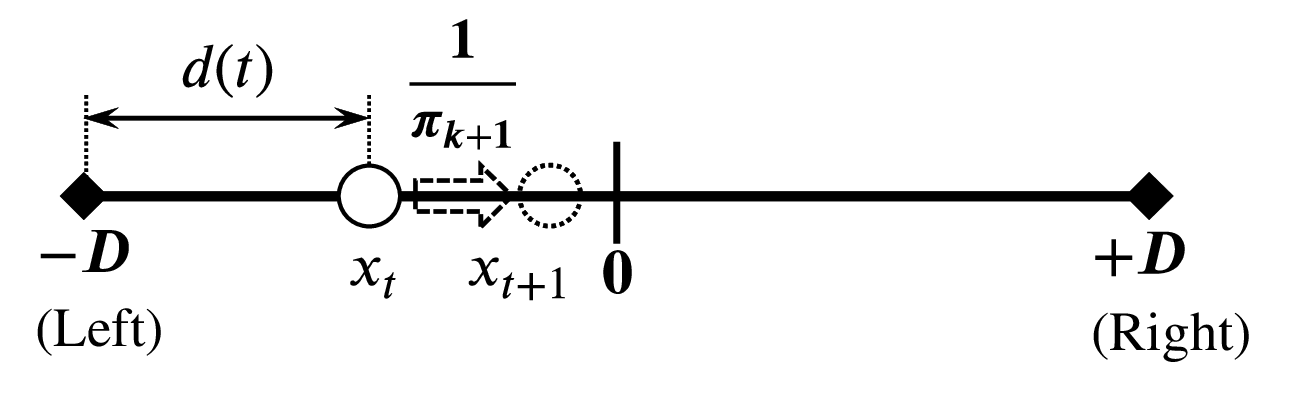}
\hspace{3em}
\includegraphics[width=60mm]{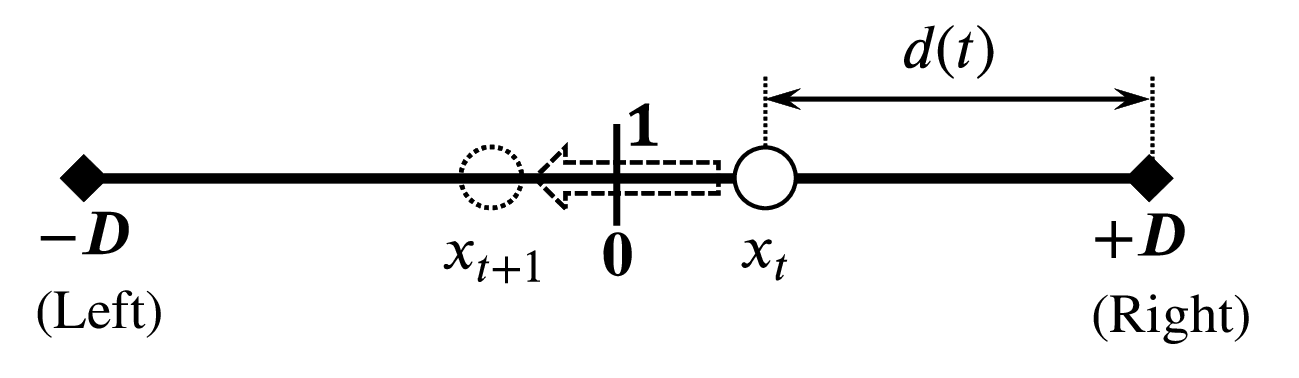}
\caption{
 For Algorithm~\ref{alg:converge}:  
 left figure shows the case of $\SIDE(t)=L$ and 
 right figure shows the case of $\SIDE(t)=R$. 
}
\label{Falg1R}
\end{center}
\end{figure}

\begin{algorithm}
\caption{(for convergence)}
\label{alg:converge}
\algsetup{indent=1.5em}
\begin{algorithmic}[1]
\LOOP
\STATE {observe $(\SIDE,d)$ or $\Myquit$}
	\IF {$\SIDE=L$}
		\IF {$d \in \Delta_k$}
		  \STATE move right by $\frac{1}{\pi_{k+1}}$
		\ELSE
	  	  \STATE move to the left-end 
		\ENDIF
  	\ELSIF {$\SIDE=R$}
		\STATE move left by $1$
	\ELSE 
	  \STATE (i.e., $\Myquit$ is observed) stay there
	\ENDIF
\ENDLOOP
\end{algorithmic}
\end{algorithm}

\section{Relaxation 2: Restricted $D$}\label{sec:restriction}
 Problem~\ref{prob:original} is solved if we have some prior knowledge on $D$. 
 This section presents some nontrivial and interesting examples. 
 Section~\ref{sec:algebraic} proves that 
   the problem is solvable if $D$ is restricted to be an algebraic real. 
 While the algebraic reals are at most countable many, 
  Section~\ref{sec:epsilon-omit} and~\ref{sec:nonCantor} respectively show  
  other interesting cases where $D$ is assumed to be in some {\em uncountable} sets. 
\subsection{Algebraic real $D$}\label{sec:algebraic}
 A real $r \in \mathbb{R}$ is {\em algebraic} 
  if $f(r)=0$ holds for a polynomial $f$ of rational coefficients. 
 Let $\mathbb{A}$ denote the whole set of algebraic reals. 
 We are concerned with the following problem. 
\begin{problem}[Algebraic real $D$]\label{prob:algebraic}
 As given 
  the observation function $\phi \colon \mathbb{R} \times [-D,D] \to \mathcal{O}$ defined by \eqref{def:phi}, 
 the goal of the problem is to design a transition map 
  $f \colon \mathcal{O} \to [-D,D]$ 
  for which the potential function $\Psi(D,x)$, defined by \eqref{def:Psi}, is bounded 
  for any \underline{algebraic real} $D$ ($1 < D < \infty$) and any real $x \in [-D,D]$. 
\end{problem}
\begin{theorem}\label{thm:rational}
 Problem~\ref{prob:algebraic} is solvable. 
\end{theorem}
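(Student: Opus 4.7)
The plan is to exploit the countability of $\mathbb{A}$ by fixing an enumeration $\mathbb{A}\cap(0,\infty)=\{\alpha_1,\alpha_2,\ldots\}$ with the monotonicity bound $\alpha_{k+1}\leq 2\alpha_k$ for every $k$, and to use the observed value $d$ itself as a ``phase index'' into this list. Such an enumeration can be arranged by interleaving a dense listing of $\mathbb{A}$ with dyadic bridging values, ensuring both that every algebraic real appears and that the motion constraint $|f(\phi(D,x))-x|\leq d$ is compatible with all the moves below. The underlying idea is that once Walker sits at an algebraic position, the observation $d=D-|x|$ is also algebraic (since $D\in\mathbb{A}$), so Walker can read off the unique index $k$ with $d=\alpha_k$ and act on it.

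The main transition rule would then be $M(R,\alpha_k)=\alpha_k-\alpha_{k+1}$ and $M(L,\alpha_k)=\alpha_{k+1}-\alpha_k$, which sends Walker to the position $\pm(D-\alpha_{k+1})$. If $\alpha_{k+1}=D$ this is the midpoint and Walker halts; otherwise the next observation is again algebraic with phase index $k+1$, so the phase clock advances by exactly one step per iteration. Since $D\in\mathbb{A}$ equals some $\alpha_K$, termination follows once the phase reaches $K$, which happens in at most $K$ steps after the initial synchronisation.

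To handle the initial phase when $x_0$ may be transcendental (and hence $d_0$ transcendental), I would add two preparatory behaviours. First, a ``doubling'' rule $M(R,d)=-d$ (and its mirror on the left) that inflates the observation geometrically to $2d,4d,\ldots$ whenever $d$ is too small to satisfy $\alpha_1\leq 2d$. Second, once $d$ is large enough, a reset rule $M(R,d)=d-\alpha_1$ that places Walker exactly at the algebraic position $D-\alpha_1$, from which the phase-clock rule above takes over. Both rules are compatible with $|M|\leq d$ and apply uniformly regardless of whether $d$ is algebraic or transcendental.

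The hardest part, and the main obstacle I anticipate, is the interaction with side switches. When $\alpha_{k+1}>D$, Walker crosses to the opposite side and observes $(L,2D-\alpha_{k+1})$: this value is algebraic, but its index in the enumeration depends on the unknown $D$, so the phase clock does not merely increment by one but can jump to an arbitrary position in the enumeration. I would address this by grafting a ``side-switch resynchroniser'' onto the transition map: on the opposite-side branch, define the motion so that it quickly drives Walker back to a canonical algebraic reference position (for instance $-(D-\alpha_1)$), from which the monotone advance through the enumeration resumes. Proving that the resulting trajectory terminates for every algebraic $D$ and every initial $x_0$ then reduces to a bookkeeping argument bounding the total number of side switches before the phase clock can reach $K$; this bookkeeping, together with the verification that $D=\alpha_K$ is in fact visited before any overshoot can recur indefinitely, is where the real technical work of the proof would lie.
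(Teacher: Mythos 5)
Your scheme of walking through a fixed enumeration $\alpha_1,\alpha_2,\ldots$ of the positive algebraic reals has a gap that you correctly locate but do not close, and the fix you sketch cannot work. The ``side-switch resynchroniser'' returns Walker to a canonical configuration (e.g.\ the position $-(D-\alpha_1)$, phase $1$). But the process is deterministic and memoryless: from that canonical configuration the trajectory is completely determined by $D$, so it will advance through exactly the same window of indices $1,2,\ldots,k_0$ until the first overshoot ($\alpha_{k_0+1}>D$), resynchronise, and repeat forever. Hence $D=\alpha_K$ is localized only if $K\leq k_0+1$, i.e.\ only if $D$ is the \emph{first} element of the enumeration that is $\geq D$. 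No fixed enumeration can make every algebraic $D\in(1,\infty)$ a running maximum of the list (the running maxima form a monotone, hence well-ordered, subsequence, while the algebraic numbers in $(1,2)$ are densely ordered), so for most $D$ the walk cycles without ever reaching the midpoint. A second, smaller conflict: the same observation $(L,\alpha_j)$ must serve both as a resynchronisation trigger and as a phase-advance step depending on history Walker does not have; in particular a rule of the form $M(L,d)=\alpha_1-d$ fixes the point $d=\alpha_1$ and deadlocks.

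The missing idea, which is the heart of the paper's proof, is to make a \emph{single observation} encode both the elapsed step count and the value of $D$, so that Walker can jump to the midpoint in one move and the overshoot/cycling issue never arises. The paper has Walker step rightwards from the left end in increments of $1/\mathrm{e}$; upon first seeing the right end his observation is $d=2D-k'/\mathrm{e}$. Because $\mathrm{e}$ is transcendental and $D$ is algebraic, the sets $\Sigma_k=\{a-k/\mathrm{e} : a\in\mathbb{A}\}$ are pairwise disjoint (Lemma~\ref{algebraic}), so $d$ determines $k'$ uniquely and $d+k'/\mathrm{e}=2D$ recovers the length exactly; Walker then moves left by $\frac{1}{2}(k'/\mathrm{e}-d)$ and lands on $0$. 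Your enumeration never exploits algebraicity beyond ``$D$ appears somewhere in a countable list,'' which is why it cannot decode $D$ from what Walker sees. Your preparatory ``doubling'' rule for a transcendental start does match the paper's handling of the right-side initial case, but the main loop needs to be replaced by a decoding argument of the above kind rather than a phase clock.
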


 For convenience, we define 
\begin{equation}
 \Sigma_k = \left\{a-\frac{k}{{\rm e}} \ \middle|\ a \in \mathbb{A} \right\}
\end{equation}
  for any $k \in \mathbb{Z}_{\geq 0}$, where ${\rm e}$ denotes Napier's constant. 
 Using the well-known fact that ${\rm e}$ is transcendental, 
  we claim the following lemma as a preliminary step of the proof of Theorem~\ref{thm:rational}. 
\begin{lemma}\label{algebraic}
 $\Sigma_i \cap \Sigma_j = \emptyset$ for $i \ne j$.
\end{lemma}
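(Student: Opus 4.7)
The plan is to argue by contradiction using the transcendence of $\mathrm{e}$, exploiting that the algebraic reals $\mathbb{A}$ form a field.

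Suppose toward a contradiction that some $r \in \Sigma_i \cap \Sigma_j$ with $i \neq j$. By definition of the two sets there must exist algebraic reals $a_i, a_j \in \mathbb{A}$ satisfying $a_i - i/\mathrm{e} = a_j - j/\mathrm{e}$. Rearranging gives
\begin{equation*}
 (i-j)/\mathrm{e} = a_i - a_j.
\end{equation*}
I would then split on whether the right-hand side is zero. If $a_i = a_j$, then $(i-j)/\mathrm{e} = 0$, which forces $i = j$ because $\mathrm{e}$ is finite and nonzero; this contradicts our assumption $i \neq j$. Otherwise $a_i \neq a_j$, and since $i,j$ are distinct integers, $i - j$ is a nonzero integer. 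Solving for $\mathrm{e}$ yields
\begin{equation*}
 \mathrm{e} = \frac{i-j}{a_i - a_j}.
\end{equation*}

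The key observation is that the right-hand side is algebraic: $a_i - a_j$ is algebraic because $\mathbb{A}$ is closed under subtraction, it is nonzero by assumption, $i-j$ is a nonzero integer and hence algebraic, and the quotient of two algebraic numbers (with nonzero denominator) is algebraic because $\mathbb{A}$ is a field. Hence $\mathrm{e}$ would be algebraic, contradicting the classical theorem that $\mathrm{e}$ is transcendental. Therefore no such $r$ exists, proving $\Sigma_i \cap \Sigma_j = \emptyset$ whenever $i \neq j$.

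There is no real obstacle here; the argument is essentially a one-line consequence of transcendence together with the field structure of $\mathbb{A}$. The only point that might merit an explicit sentence in the written proof is the case analysis on $a_i = a_j$ versus $a_i \neq a_j$, since dividing by $a_i - a_j$ is only legitimate in the latter case.
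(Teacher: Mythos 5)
Your proof is correct and follows essentially the same route as the paper: equate the two representations of $r$, isolate $\mathrm{e}$ as a ratio of algebraic numbers, and contradict its transcendence. The only cosmetic difference is that the paper divides by $i-j$ (which is nonzero by hypothesis, so no case split is needed) and writes $\mathrm{e}=\frac{a_1-a_2}{i-j}$, whereas you divide by $a_i-a_j$ and therefore need the extra case analysis; both are fine.
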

\begin{proof}
 Assume for a contradiction that $r \in \Sigma_i \cap \Sigma_j$ exists. 
 Since $r \in \Sigma_i$, 
  there is an algebraic real $a_1\in \mathbb{A}$ such that $r = a_1 - \frac{i}{\rm e}$. 
 Similarly,  
  there is an algebraic real $a_2\in \mathbb{A}$ such that $r = a_2 - \frac{j}{\rm e}$ 
  since $r \in \Sigma_j$. 
 Thus, we obtain that $a_1-a_2 - \frac{i-j}{\rm e} = 0$, 
  which implies ${\rm e}=\frac{a_1-a_2}{i-j}$ where we specially remark that $i-j \neq 0$. 
 Notice that $\frac{a_1-a_2}{i-j} \in \mathbb{A}$, and  
 it contradicts to that ${\rm e} \not\in \mathbb{A}$. 
\end{proof}

 Now, we prove Theorem~\ref{thm:rational}.
\begin{proof}[Proof of Theorem~\ref{thm:rational}]
 The proof is constructive. 
 We define a transition map $f\colon\mathcal{O} \to [-D,D]$ to solve Problem~\ref{prob:algebraic} by 
\begin{eqnarray*}
f((L,d))&=&\begin{cases}
  x+\tfrac{1}{{\rm e}} 
    & \mbox{if $d = \tfrac{k}{{\rm e}}$ for some $k = 0,1,2,\ldots$} \\
  x-d \quad (=-D) 
    & \mbox{otherwise, i.e., $d \neq \frac{k}{{\rm e}}$ for any $k \in \mathbb{Z}_{\geq 0}$}\\
 \end{cases} \\
f((R,d))&=&\begin{cases}
  x- \frac{1}{2}\left(\frac{k}{{\rm e}} -d \right)
    & \mbox{if $d + \frac{k}{{\rm e}} \in \mathbb{A}$ for some $k =1,2,\ldots$, and 
    $0 < \frac{1}{2}\left(\frac{k}{{\rm e}} -d \right) < d$} \\[2ex]
  x-d 
    & \mbox{otherwise} 
 \end{cases}  \\
f(\Myquit) &=& x
\end{eqnarray*}
   in each case of $\phi(D,x) = (L,d)$, $(R,d)$ or $\Myquit$
  for any $x \in [-D,D]$ (see also Algorithm~\ref{alg:rational}). 
 It is not difficult to observe that 
   $f$ is a transition map (recall Problem~\ref{prob:original}). 
 For convenience, 
  let $(\SIDE(t),d(t))=\phi(D,x_t)$. 

 First, we show for any $x_0 \in [-D,0)$ that 
  a finite $n \in \mathbb{Z}_{> 0}$ exists such that $x_n=0$ 
  where $x_t = f(\phi(D,x_{t-1}))$ for $t=1,2,\ldots$. 
 If $d(0) \neq k/{\rm e}$ for any $k \in \mathbb{Z}_{\geq 0}$, 
  then $x_1 = -D$, meaning that $d(1)=0$, thus it is reduced to the case $d(0) = k/{\rm e}$ for $k=0$. 
 Suppose that $\SIDE(0)=L$ and $d(0) = k/{\rm e}$ for some $k$. 
 Then, 
  it is not difficult to observe that $d(t+1) = (k+1)/{\rm e}$. 
 This inductively implies that 
   we have a finite $\tau = \min\{t' \in \mathbb{Z}_{>0} \mid \SIDE(t') = R\}$
    since $\lim_{i \to \infty}i/{\rm e}=\infty$. 
 Notice that $x_{\tau} = -D+k'/{\rm e}$ 
  for some ${k'} \in \mathbb{Z}_{>0}$. 
 Since the hypothesis that $D \in \mathbb{A}$, 
  $d(\tau) = D-x_{\tau} = 2D-k'/{\rm e} \in \Sigma_{k'}$ holds, 
  in other words $d(\tau)+k'/{\rm e} = 2D \in \mathbb{A}$ holds 
  where we specially remark that ${k'}$ is uniquely determined by Lemma~\ref{algebraic}. 
 Now, it is not difficult from the definition of $f$ to observe that 
  $x_{\tau+1} = x_{\tau}-\frac{k'/{\rm e}-d(\tau)}{2}=0$ holds. 
 Here we also remark that 
  $\frac{k'/{\rm e}-d(\tau)}{2} < d(\tau)$ holds at that time 
  since 
   $\frac{k'/{\rm e}-d(\tau)}{2} = x_{\tau} < 1/{\rm e}$ holds 
  while $d(\tau) = D - x_{\tau} > 1 - 1/{\rm e}$ holds. 
 We obtain the claim in this case (Fig~\ref{Falg2}).  
 
 Next, we are concerned with the case that $x_0 \in (0,D]$, and 
  show that there is $t \in \mathbb{Z}_{>0}$ such that $x_t \leq 0$, 
  then it is reduced to the case that $x_0 \in [-D,0)$, or the trivial case $x_0=0$. 
 Notice that 
  if $d(s)+k/{\rm e} \not \in \mathbb{A}$ for any $k \in \{1,2,\ldots\}$ then $d(s+1)=2d(s)$, 
  which implies that if the case occurs at most finite times, 
  we eventually obtain the desired case that $x_t<0$. 
 Suppose $d(0)+k/{\rm e} \in \mathbb{A}$, 
  then we claim that $d(s)+i/{\rm e} \not\in \mathbb{A}$ 
  for any $s \in \{t \in \mathbb{Z}_{>0} \mid \forall t'\leq t,\ x_{t'} > 0\}$ 
  and for any $i \in \mathbb{Z}_{>0}$. 
 By the definition of $f$, 
  if $0 < \frac{k/{\rm e}-d(0)}{2} < d(0)$ then 
$
 d(1) 
= D-x_1
= D - \left(x_0 - \frac{k/{\rm e}-d(0)}{2} \right) 
= d(0) + \frac{k/{\rm e}-d(0)}{2} 
= \frac{d(0)+k/{\rm e}}{2} 
$
 and hence $d(1) \in \mathbb{A}$ by the hypothesis of the case. 
 This implies that $d(1) + i/{\rm e} \not\in \mathbb{A}$ for any $i = 1,2,3,\ldots$. 
 Clearly, $d(2) = 2d(1) \in \mathbb{A}$, and recursively we obtain the claim. 
\end{proof}
%
%
%
\begin{figure}[tp]
\begin{center}
\includegraphics[width=70mm]{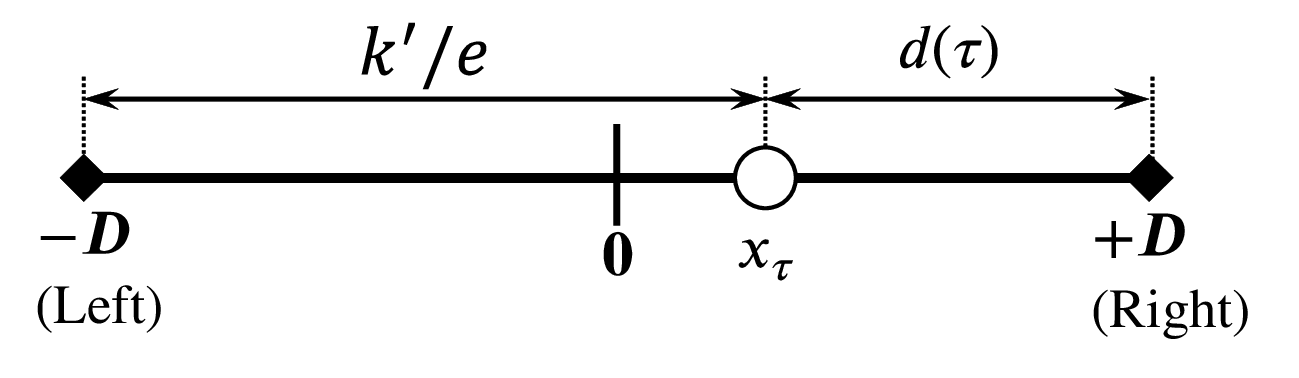}
\caption{
At time $\tau$ in the proof of Theorem~\ref{thm:rational}. 
}
\label{Falg2}
\end{center}
\end{figure}
%
%
%
\begin{algorithm}
\caption{(for algebraic $D$)}
\label{alg:rational}
\algsetup{indent=1.5em}
\begin{algorithmic}[1]
\LOOP
\STATE observe $(\SIDE,d)$ or $\Myquit$
	\IF {$\SIDE=L$}
		\IF {$d = k/{\rm e}$ for some $k=0,1,2,\ldots$}
		  \STATE move right by $1/{\rm e}$
		\ELSE
   	  	  \STATE move to the left-end 
		\ENDIF
	\ELSIF {$\SIDE=R$}
		\IF {[$d+k/{\rm e} \in \mathbb{A}$ for some $k=1,2,\ldots$] and [$\frac{k/{\rm e}-d}{2} \leq d$]}
	  	  	  \STATE move left by $\frac{k/{\rm e}-d}{2}$
		\ELSE
		  \STATE move left by $d$ 
		\ENDIF
	\ELSE
	  \STATE (i.e., $\Myquit$ is observed) stay there
	\ENDIF
\ENDLOOP
\end{algorithmic}
\end{algorithm}

\subsection{$D$ chosen from an uncountable set}\label{sec:epsilon-omit}
 In Section~\ref{sec:algebraic}, 
   we have shown that the localization problem is solved if $D$ is algebraic. 
 Since the algebraic reals are at most countable many, 
  it is natural to ask 
   if there is a solvable case where $D$ is chosen from an uncountable set. 
 Section~\ref{sec:epsilon-omit} affirmatively answers the question. 

 For an arbitrary small $\epsilon$ ($0 < \epsilon < 1/2$), let 
\begin{eqnarray}
 \overline{\mathbf{S}}_{\epsilon} = \{x \in \mathbb{R} \mid x - \lfloor x \rfloor \in [\epsilon,1-\epsilon] \}. 
\end{eqnarray} 
 Then, we are concerned with the following problem. 
\begin{problem}[$D$ in $\overline{\mathbf{S}}_{\epsilon}$]\label{prob:epsilon-omit}
 As given 
  the observation function $\phi \colon \mathbb{R} \times [-D,D] \to \mathcal{O}$ defined by \eqref{def:phi}, 
 the goal of the problem is to design a transition map 
  $f \colon \mathcal{O} \to [-D,D]$ 
  for which the potential function $\Psi(D,x)$, defined by \eqref{def:Psi}, is bounded 
  for any $D$ ($1 < D < \infty$) such that \underline{$D$ in $\overline{\mathbf{S}}_{\epsilon}$} and any real $x \in [-D,D]$. 
\end{problem}
\begin{theorem}\label{thm:epsilon-omit}
 Problem~\ref{prob:epsilon-omit} is solvable. 
\end{theorem}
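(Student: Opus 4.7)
The plan is to give a constructive proof by exhibiting a transition map $f$ whose dynamics, from any initial $x_0 \in [-D,D]$, reaches $0$ in finitely many steps for any $D \in \overline{\mathbf{S}}_\epsilon$. The key observation is that $\{D\} \in [\epsilon, 1-\epsilon]$ forces $D \notin \mathbb{Z}$, so if Walker is at $-D$ and steps right by $+1$ each time, then after exactly $\lfloor D \rfloor + 1$ steps he crosses the midpoint, landing at the unique position $x_\tau = 1 - \{D\} > 0$ with observation $(R, d_\tau)$ where $d_\tau = \lfloor D \rfloor - 1 + 2\{D\}$. The rest of the construction is essentially a decoder of $\{D\}$ from this single $d_\tau$, together with a routing mechanism that funnels any non-canonical initial state into the canonical trajectory.

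First, I will define $f((L,d))$ in two regimes: if $d \in \mathbb{Z}_{\geq 0}$ (``on-lattice'') Walker advances by $+1$, and if $\{d\} \in [1/2, 1)$ Walker resets to $-D$ via motion $-d$. This guarantees that any off-lattice left observation with $\{d\} \geq 1/2$ returns Walker to $-D$ in a single step, after which the lattice rule takes over and drives him across to $x_\tau = 1 - \{D\}$. Second, I will set $f((R,d))$ to jump directly to $0$ using the decoded value of $\{D\}$: in the generic case this motion is $-(1 - \{d\}/2)$, and in the integer subcase $\{D\}=1/2$ (where $d_\tau$ itself is an integer) it is $-1/2$. Applied to any other right-side observation, the same formula sends Walker to a left-side state whose fractional structure then routes it, in at most a few steps, into the reset rule.

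The main obstacle is that the crossing observation $d_\tau$ admits two distinct decodings, $\{D\} = \{d_\tau\}/2$ or $\{D\} = (\{d_\tau\}+1)/2$, corresponding to $\{D\} < 1/2$ and $\{D\} > 1/2$, and the stateless algorithm cannot distinguish which holds on a single sample. I intend to resolve this by tuning $f((R,d))$ so that in the ``wrong-decoding'' subcase $\{D\} > 1/2$, Walker deterministically lands not at $0$ but at the distinguished position $-1/2$. The subsequent observation is then $(L, \lfloor D \rfloor + \{D\} - 1/2)$ with fractional part $\{D\}-1/2 \in (0, 1/2-\epsilon]$, a regime I will reserve on the left side for a $+1/2$-correction rule, so that one additional step lands Walker at $0$.

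The central technical content will be to verify that this three-regime left rule (integer, $\{d\} \in (0,1/2)$, $\{d\} \in [1/2,1)$) together with the right rule does not create a limit cycle for non-canonical starts: in particular, a naive $+1/2$-correction from a generic left position could oscillate Walker between two points near $0$ forever. I expect to control this by arguing that each such oscillation strictly shifts $\{d\}$ under a doubling-type map, so that within $O(\log(1/\epsilon))$ bounces the observation enters the $\{d\} \in [1/2,1)$ reset regime. The hypothesis $\{D\} \in [\epsilon, 1-\epsilon]$ will be invoked precisely here to bound the drift of $\{d\}$ away from $0$ and $1$, ensuring that the reset is triggered in finitely many steps; once triggered, the canonical trajectory concludes in at most $\lfloor D \rfloor + 3$ further steps, yielding the required bound on $\Psi(D, x)$.
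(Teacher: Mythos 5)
Your canonical trajectory and decoding identity are exactly the paper's (step by $1$ from the left end, first right observation $d_\tau=\lfloor D\rfloor+2(D-\lfloor D\rfloor)-1$ at $x_\tau=1-(D-\lfloor D\rfloor)$, with the two-way parity ambiguity of Lemmas~\ref{lem:epsilon-omit-midpoint} and~\ref{lem:epsilon-omit-parity}). The gap is in your resolution of the ambiguity. Your right-side rule ``move by $-(1-\tfrac{d-\lfloor d\rfloor}{2})$'' combined with a left-side ``$+\tfrac12$-correction'' on the whole regime where the fractional part of $d$ lies in $(0,\tfrac12)$ (or even $(0,\tfrac12-\epsilon]$) admits an attracting $2$-cycle, so $\Psi(D,x)=\infty$ for some instances. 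Concretely, take $\epsilon=0.1$, $D=2.3$, $x_0=-0.1$: then $d_0=2.2$, the $+\tfrac12$ rule sends Walker to $0.4$; there $d_1=1.9$, your right rule moves him by $-0.55$ to $-0.15$; then $d_2=2.15$ triggers $+\tfrac12$ again, and one checks $x_{2k+2}=\tfrac{x_{2k}}{2}-0.1$, so $x_{2k}\downarrow -0.2$ and the trajectory oscillates forever between $\approx-0.2$ and $\approx 0.3$ without ever reaching $0$ or the reset regime. Your termination heuristic is also miscalibrated: the induced map on fractional parts is \emph{contracting} (slope $\tfrac12$, fixed point $2(D-\lfloor D\rfloor)-\tfrac12$, which sits inside the correction regime whenever $D-\lfloor D\rfloor<\tfrac12$), not ``doubling-type,'' so the $O(\log(1/\epsilon))$ escape argument cannot work. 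The defect is unpatchable within your scheme: the legitimate post-$(-\tfrac12)$ observation for one $D$ and a loop observation for another $D$ can be the identical pair $(L,d)$, and a memoryless map must treat them alike.

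The paper sidesteps precisely this by never jumping toward $0$ directly from a right observation. Its right rule always moves by $1\mp\frac{\epsilon}{2}\cdot\frac{d-\lfloor d\rfloor}{2}$, \emph{encoding} the quantity $\frac{d-\lfloor d\rfloor}{2}$ (and the parity of $\lfloor d\rfloor$, via the sign) as a perturbation of magnitude less than $\epsilon/4$ around the lattice point $-D+\lfloor D\rfloor$; the jump to $0$ is then executed only from a left observation lying in the tiny coded set $H^{\pm}_{\epsilon/4}$, where the perturbation is rescaled by $2/\epsilon$ to recover $\frac{d(\tau)-\lfloor d(\tau)\rfloor}{2}$. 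The hypothesis $D\in\overline{\mathbf{S}}_{\epsilon}$ is what guarantees that no observation produced outside this designated two-step window lands in $H^{\pm}_{\epsilon/4}$ (inequalities \eqref{tmp180921a}--\eqref{tmp180921c}), so the decode rule cannot misfire and no cycle can form. If you want to salvage your construction, you need some analogue of this separation: the set of left observations on which the final corrective jump fires must be provably unreachable by the generic dynamics, which your half-unit-wide correction regime is not.
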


 Notice that the cardinality of $\overline{\mathbf{S}}_{\epsilon}$ is equal to that of $\mathbb{R}$.
 Furthermore, the Lebesgue measure of $[\epsilon,1-\epsilon]$ is equal to $1-2\epsilon$, and hence 
 $\frac{\overline{\mathbf{S}}_{\epsilon} \cap [0,1]}{[0,1]} = 1-2\epsilon$.

\subsubsection{Idea for an algorithm}
\begin{figure}[tp]
\begin{center}
\includegraphics[width=140mm]{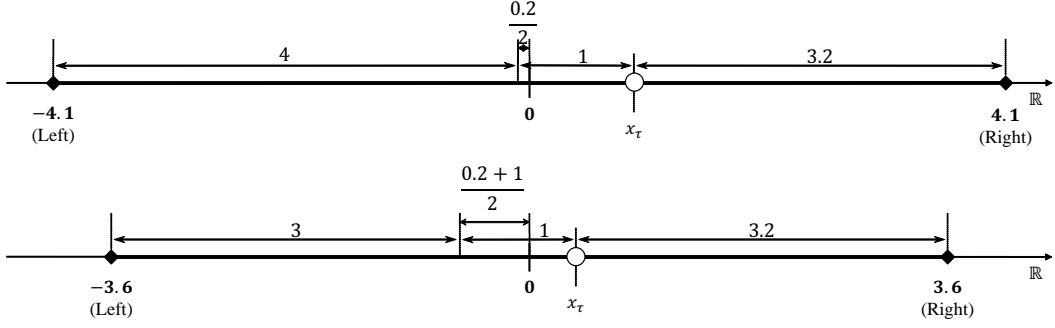} 
\caption{
 An example of the basic idea to solve Problem~\ref{prob:epsilon-omit}. 
 Suppose that 
  Walker steps one by one from the left-end, and 
  he observes $\SIDE(\tau) = R$ for the first time at $x_{\tau}$.
 Notice that the number of steps $\tau = \lfloor D \rfloor + 1$ from the left-end to $x_{\tau}$, and 
  hence $d(\tau) = 2D - (\lfloor D \rfloor + 1)$ holds. 
 Suppose $d(\tau)=3.2$, 
  then there are two possibilities that $D=4.1$ or $3.6$; 
 If $D - \lfloor D \rfloor < 1/2$ 
  then $\lfloor d(\tau) \rfloor = \lfloor D \rfloor - 1 = 3$ 
     by Proposition~\ref{obs:epsilon-omit1}, and 
  Lemmas~\ref{lem:epsilon-omit-midpoint} implies that  
  $ D = \lfloor D \rfloor + \tfrac{d - \lfloor d \rfloor}{2} = 4+ \tfrac{3.2 -3}{2} = 4.1$ holds. 
 If $D - \lfloor D \rfloor \geq 1/2$ 
  then $\lfloor d(\tau) \rfloor = \lfloor D \rfloor = 3$ 
     by Proposition~\ref{obs:epsilon-omit1}, and 
  Lemmas~\ref{lem:epsilon-omit-midpoint} implies that  
  $ D = \lfloor D \rfloor + \tfrac{d - \lfloor d \rfloor}{2} +\frac{1}{2} = 3+ \tfrac{3.2 -3}{2} + 0.5 = 3.6$ holds. 
}
\label{fig:3.1.1a}
\end{center}
\end{figure}
 Here we briefly explain the basic idea to solve Problem~\ref{prob:epsilon-omit} 
   (see also Figures~\ref{fig:3.1.1a} and~\ref{fig:3.1.1b}). 
 As an easy case, suppose that Walker starts from the left-end.  
 In our algorithm, 
  Walker iteratively moves to right with length one in a step unless he observes the right-end. 
 When Walker observes the right-end for the first time, 
  Walker is in the distance $\lfloor D \rfloor + 1$ from the left-end. 
 At that time, the distance $d$ from the right-end, which Walker observes, is equal to $2D-(\lfloor D \rfloor + 1)$. 
 Here we observe the following easy but important proposition. 
\begin{proposition}\label{obs:epsilon-omit1}
 For any $D \in \mathbb{R}$, 
\begin{eqnarray*}
 \lfloor 2D - (\lfloor D \rfloor + 1) \rfloor =
\begin{cases}
 \lfloor D \rfloor -1 &  (\mbox{if $D - \lfloor D \rfloor < \frac{1}{2}$}) \\
 \lfloor D \rfloor    &  (\mbox{otherwise})
\end{cases}
\end{eqnarray*}
holds.
\end{proposition}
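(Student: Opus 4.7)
The plan is a direct computation using the standard decomposition $D = \lfloor D \rfloor + \{D\}$, where $\{D\} = D - \lfloor D \rfloor \in [0,1)$ denotes the fractional part. Writing $n = \lfloor D \rfloor$ and $\alpha = \{D\}$, the quantity inside the outer floor becomes
\begin{equation*}
 2D - (\lfloor D \rfloor + 1) = 2(n+\alpha) - (n+1) = (n-1) + 2\alpha.
\end{equation*}
Since $n - 1$ is an integer, the floor can be pulled out:
\begin{equation*}
 \lfloor 2D - (\lfloor D \rfloor + 1) \rfloor = (n-1) + \lfloor 2\alpha \rfloor.
\end{equation*}

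The remaining work is to evaluate $\lfloor 2\alpha \rfloor$ on the two ranges of $\alpha$. When $\alpha \in [0, 1/2)$, we have $2\alpha \in [0,1)$, so $\lfloor 2\alpha \rfloor = 0$ and the whole expression equals $n - 1 = \lfloor D \rfloor - 1$. When $\alpha \in [1/2, 1)$, we have $2\alpha \in [1,2)$, so $\lfloor 2\alpha \rfloor = 1$ and the whole expression equals $n = \lfloor D \rfloor$. These two cases exactly match the two branches of the claimed formula, completing the proof.

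There is essentially no obstacle here; the statement is a one-line floor identity, and the only mild care required is to check that the threshold $\alpha = 1/2$ is assigned to the correct branch (it goes with the ``otherwise'' branch, consistent with the strict inequality $D - \lfloor D \rfloor < 1/2$ in the first branch).
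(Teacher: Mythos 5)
Your proposal is correct and is essentially identical to the paper's own proof: both rewrite the argument as $2(D-\lfloor D\rfloor)+\lfloor D\rfloor-1$, pull the integer part out of the floor, and evaluate $\lfloor 2(D-\lfloor D\rfloor)\rfloor$ by the two cases on the fractional part. Nothing further is needed.
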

\begin{proof}
 Notice that 
\begin{eqnarray*}
 \lfloor 2D - (\lfloor D \rfloor + 1) \rfloor 
 = \lfloor 2(D - \lfloor D \rfloor) + \lfloor D \rfloor  -1  \rfloor
 = \lfloor 2(D - \lfloor D \rfloor)\rfloor + \lfloor D \rfloor  -1  
\end{eqnarray*}
 holds. 
 If $D - \lfloor D \rfloor < 1/2$ then $\lfloor 2(D - \lfloor D \rfloor) \rfloor = 0$,  
 otherwise $\lfloor 2(D - \lfloor D \rfloor) \rfloor = 1$, and we obtain the claim. 
\end{proof}

\begin{lemma}\label{lem:epsilon-omit-midpoint}
Suppose $D>0$ is not an integer.  
Let $d = 2D - (\lfloor D \rfloor + 1)$, then 
\begin{eqnarray*}
 D = 
\begin{cases}
 \lfloor D \rfloor + \dfrac{d - \lfloor d \rfloor}{2} 
   &  (\mbox{if $D - \lfloor D \rfloor < \frac{1}{2}$}) \\
 \lfloor D \rfloor + \dfrac{d - \lfloor d \rfloor}{2} + \dfrac{1}{2}
   &  (\mbox{otherwise}) 
\end{cases}
\end{eqnarray*}
 holds. 
\end{lemma}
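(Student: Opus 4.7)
The lemma is essentially an algebraic corollary of Proposition~\ref{obs:epsilon-omit1}, so the plan is to split on the two cases distinguished there and, in each case, compute the fractional part of $d = 2D - (\lfloor D \rfloor + 1)$ explicitly, then solve for $D$.

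First I would dispose of the case $D - \lfloor D \rfloor < 1/2$. By Proposition~\ref{obs:epsilon-omit1}, $\lfloor d \rfloor = \lfloor D \rfloor - 1$, so
\begin{equation*}
 d - \lfloor d \rfloor
 = \bigl(2D - \lfloor D \rfloor - 1\bigr) - \bigl(\lfloor D \rfloor - 1\bigr)
 = 2\bigl(D - \lfloor D \rfloor\bigr),
\end{equation*}
and rearranging gives $D = \lfloor D \rfloor + (d - \lfloor d \rfloor)/2$, matching the first branch of the claim.

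Next I would handle the remaining case $D - \lfloor D \rfloor \geq 1/2$ (which, combined with the hypothesis that $D$ is not an integer, means $1/2 \leq D - \lfloor D \rfloor < 1$). Here Proposition~\ref{obs:epsilon-omit1} gives $\lfloor d \rfloor = \lfloor D \rfloor$, hence
\begin{equation*}
 d - \lfloor d \rfloor
 = \bigl(2D - \lfloor D \rfloor - 1\bigr) - \lfloor D \rfloor
 = 2\bigl(D - \lfloor D \rfloor\bigr) - 1,
\end{equation*}
and solving for $D$ yields $D = \lfloor D \rfloor + (d - \lfloor d \rfloor)/2 + 1/2$, which is the second branch.

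There is no real obstacle here: the content is entirely in Proposition~\ref{obs:epsilon-omit1}, and the proof of the lemma is a two-line substitution in each case. The only mild care needed is to note that the hypothesis ``$D$ is not an integer'' guarantees we never sit exactly at $d - \lfloor d \rfloor = 0$ in an ambiguous way, and that $\lfloor d \rfloor$ is given by the proposition rather than having to be recomputed. I would finish by remarking that the lemma is precisely what allows Walker, upon first observing $\SIDE = R$ after walking from the left end, to recover $D$ exactly from the observed $d$ together with the single bit ``is $D - \lfloor D \rfloor$ less than $1/2$?'', which is exactly the bit that the restriction $D \in \overline{\mathbf{S}}_{\epsilon}$ will later let the algorithm determine.
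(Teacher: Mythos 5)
Your proposal is correct and follows essentially the same route as the paper: invoke Proposition~\ref{obs:epsilon-omit1} to pin down $\lfloor d \rfloor$ in each of the two cases and then do a short algebraic rearrangement of $d = 2D - (\lfloor D \rfloor + 1)$. The only cosmetic difference is that you compute the fractional part $d - \lfloor d \rfloor$ explicitly in each branch, whereas the paper first solves $2D = d + \lfloor D \rfloor + 1$ for $D$ and then substitutes for $\lfloor D \rfloor$; both are two-line computations with the same content.
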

\begin{proof}
 By Proposition~\ref{obs:epsilon-omit1}, 
\begin{eqnarray}
 \lfloor d \rfloor = \lfloor 2D - (\lfloor D \rfloor + 1) \rfloor = 
\begin{cases}
 \lfloor D \rfloor -1 &  (\mbox{if $D - \lfloor D \rfloor < \frac{1}{2}$)} \\
 \lfloor D \rfloor    &  (\mbox{otherwise})
\end{cases}
\label{eq:epsilon-omit2}
\end{eqnarray}
 hold. 
 Since $2D = d + (\lfloor D \rfloor + 1)$ holds by the definition of $d$, 
\begin{eqnarray*}
 D = \frac{d + (\lfloor D \rfloor + 1)}{2} 
   = \lfloor D \rfloor + \frac{d - \lfloor D \rfloor + 1}{2}
   = 
\begin{cases}
 \lfloor D \rfloor + \dfrac{d - \lfloor d \rfloor}{2} &  (\mbox{if $D - \lfloor D \rfloor < \frac{1}{2}$}) \\
 \lfloor D \rfloor + \dfrac{d - \lfloor d \rfloor + 1}{2} &  (\mbox{otherwise}) 
\end{cases}
\end{eqnarray*}
 where the last equality follows \eqref{eq:epsilon-omit2}. 
 Now, we obtain the claim.
\end{proof}

 Finally, we remark an easy implication of Proposition~\ref{obs:epsilon-omit1}. 
\begin{lemma}\label{lem:epsilon-omit-parity}
 Let $d = 2D - (\lfloor D \rfloor + 1)$. 
 Then, $D - \lfloor D \rfloor \geq 1/2$ if and only if
\begin{eqnarray*}
 \lfloor D \rfloor \equiv \lfloor d \rfloor \pmod{2}.
\end{eqnarray*}
\end{lemma}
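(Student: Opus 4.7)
The plan is to derive this lemma as an immediate corollary of Proposition~\ref{obs:epsilon-omit1}, which already pins down $\lfloor d \rfloor$ exactly in terms of whether the fractional part of $D$ lies below or at least $1/2$. In particular, one direction of that proposition gives $\lfloor d \rfloor = \lfloor D \rfloor - 1$ and the other gives $\lfloor d \rfloor = \lfloor D \rfloor$, so the parity question essentially reduces to reading off which branch we are in.

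Concretely, I would split into the two cases dictated by Proposition~\ref{obs:epsilon-omit1}. If $D - \lfloor D \rfloor \geq 1/2$, then $\lfloor d \rfloor = \lfloor D \rfloor$, so trivially $\lfloor D \rfloor \equiv \lfloor d \rfloor \pmod 2$. Conversely, if $D - \lfloor D \rfloor < 1/2$, then $\lfloor d \rfloor = \lfloor D \rfloor - 1$, and two consecutive integers always have opposite parity, so $\lfloor D \rfloor \not\equiv \lfloor d \rfloor \pmod 2$. Putting the two implications together yields the biconditional.

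There is essentially no obstacle here; the content of the lemma is already packaged in Proposition~\ref{obs:epsilon-omit1}, and the only thing to add is the elementary observation that $\lfloor D\rfloor$ and $\lfloor D\rfloor - 1$ differ in parity while $\lfloor D \rfloor$ and $\lfloor D \rfloor$ do not. The only minor care is to make sure both directions of the ``if and only if'' are covered, which is immediate from the case analysis being exhaustive.
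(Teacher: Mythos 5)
Your proof is correct and follows exactly the paper's route: the paper's entire proof of this lemma is ``By Proposition~\ref{obs:epsilon-omit1}'', and you have simply spelled out the two-case parity reading of that proposition. Nothing is missing.
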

\begin{proof}
By Proposition~\ref{obs:epsilon-omit1}. 
\end{proof}
 Thus, we can calculate $D$ 
  from $\lfloor D \rfloor$, $d$ and 
  the parities of $\lfloor D \rfloor$ and $\lfloor d \rfloor$. 
 We will prove Theorem~\ref{thm:epsilon-omit} in a constructive way based on this idea. 

\begin{figure}[tp]
\begin{center}
\includegraphics[width=140mm]{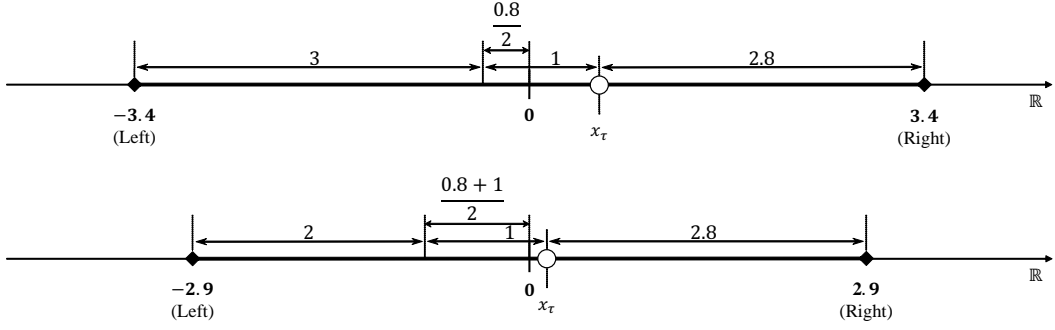}
\caption{
 Another example of the basic idea to solve Problem~\ref{prob:epsilon-omit}. 
 Similar to  Figure~\ref{fig:3.1.1a}, suppose $d(\tau)=2.8$. 
 Then there are two possibilities that $D=3.4$ or $2.9$. 
}
\label{fig:3.1.1b}
\end{center}
\end{figure}

\subsubsection{Proof of Theorem~\ref{thm:epsilon-omit}}
 For convenience, we define 
\begin{eqnarray}
 H^-_{\epsilon/4} &=& \left\{ u \in \mathbb{R}_{>0} \mid u - \lceil  u \rceil  \in \left(-\tfrac{\epsilon}{4},0 \right)\right\} \\
 H^+_{\epsilon/4} &=& \left\{ u \in \mathbb{R}_{>0} \mid u - \lfloor u \rfloor \in \left(0, \tfrac{\epsilon}{4} \right)\right\}
\end{eqnarray}
 and we also define $H^{\pm}_{\epsilon/4} = H^-_{\epsilon/4} \cup H^+_{\epsilon/4}$. 
 For any $u \in H^{\pm}_{\epsilon/4}$, let 
\begin{eqnarray}
 [u] = \begin{cases}
 \lceil u \rceil   & \mbox{(if $u \in H^-_{\epsilon/4}$)} \\
 \lfloor u \rfloor & \mbox{(if $u \in H^+_{\epsilon/4}$)} 
 \end{cases}
\end{eqnarray}
 and let 
\begin{eqnarray}
 \Delta(u) = |u-[u]| 
\end{eqnarray}
 for any $u \in H^{\pm}_{\epsilon/4}$. 

 Now, we prove Theorem~\ref{thm:epsilon-omit}. 
\begin{proof}[Proof of Theorem~\ref{thm:epsilon-omit}]
 The proof is constructive. 
 We define a transition map $f\colon \mathcal{O} \to [-D,D]$ to solve Problem~\ref{prob:epsilon-omit} by 
\begin{eqnarray*}
f((L,d))&=&
\begin{cases}
 x+1 
   & \mbox{if $d \in \mathbb{Z}$} \\
 x + \Delta(d) +\frac{2}{\epsilon}\Delta(d) +\frac{1}{2} 
   & \mbox{if $d \in H_{\epsilon/4}^-$, $[d]$ is even and 
     $\tfrac{2}{\epsilon} \Delta(d) \leq \tfrac{1}{2}-\epsilon$} \\
 x + \Delta(d) +\frac{2}{\epsilon}\Delta(d) 
   & \mbox{if $d \in H_{\epsilon/4}^-$, $[d]$ is odd and 
     $\tfrac{2}{\epsilon} \Delta(d) \geq \epsilon$} \\
 x - \Delta(d) +\frac{2}{\epsilon}\Delta(d) 
   & \mbox{if $d \in H_{\epsilon/4}^+$, $[d]$ is even and 
     $\tfrac{2}{\epsilon} \Delta(d) \geq \epsilon$} \\
 x - \Delta(d) +\frac{2}{\epsilon}\Delta(d) +\frac{1}{2} 
   & \mbox{if $d \in H_{\epsilon/4}^+$, $[d]$ is odd and 
     $\tfrac{2}{\epsilon} \Delta(d) \leq \tfrac{1}{2}-\epsilon$} \\
 x - d + \lfloor d \rfloor & \mbox{otherwise}\\
\end{cases} \\
f((R,d))&=& 
\begin{cases}
 x - \frac{1}{2} 
  & \mbox{if $d \in \mathbb{Z}$}\\
 x - 1 - \frac{\epsilon}{2} \cdotp \frac{d-\lfloor d \rfloor}{2} 
  & \mbox{if $d \not\in \mathbb{Z}$ and $\lfloor d \rfloor$ is even}\\
 x - 1 + \frac{\epsilon}{2} \cdotp \frac{d-\lfloor d \rfloor}{2}
  & \mbox{if $d \not\in \mathbb{Z}$ and $\lfloor d \rfloor$ is odd}
\end{cases}\\
f(\Myquit) &=& x
 \end{eqnarray*}
   in each case of $\phi(D,x) = (L,d)$, $(R,d)$ or $\Myquit$
  for any $x \in [-D,D]$ (see also Algorithm~\ref{alg:epsilon-omit}).
 It is not difficult to observe that 
   $f$ is a transition map (recall Problem~\ref{prob:original}). 
 For convenience, 
  let $(\SIDE(t),d(t))=\phi(D,x_t)$. 
\begin{figure}[tp]
\begin{center}
\includegraphics[width=140mm]{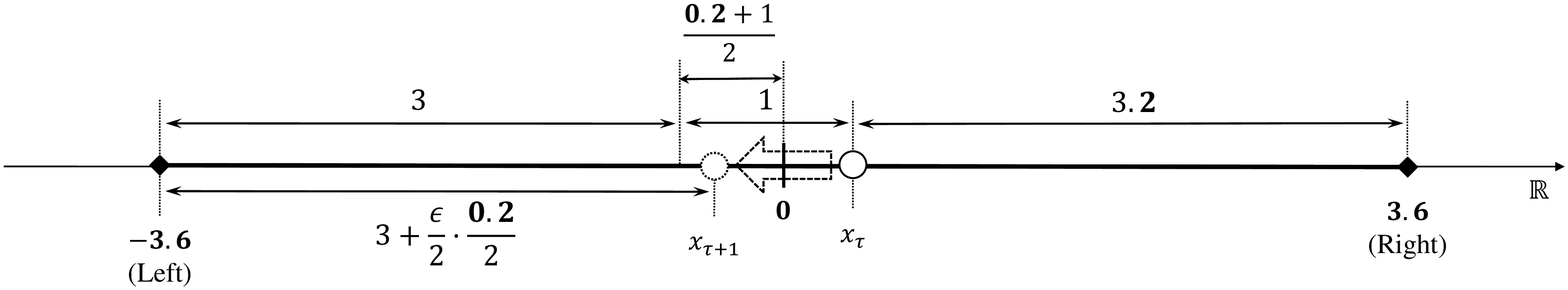} \\
\includegraphics[width=140mm]{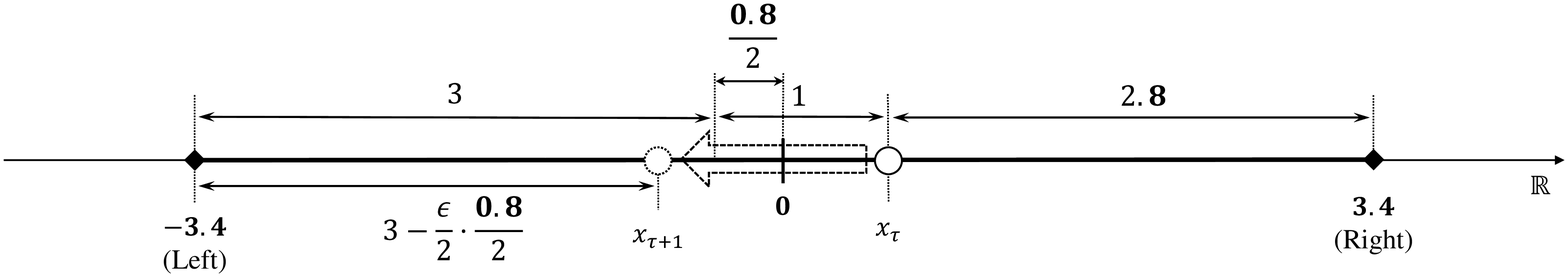}
\caption{
 Example of the motion by $f(R,d(\tau))$. 
 If $\lfloor d(\tau) \rfloor = 3$, thus $\lfloor d(\tau) \rfloor$ is odd, 
  then $x_{\tau+1}$ is slightly right to the point $-D+\lfloor D \rfloor$, 
   i.e., $d(\tau+1)>\lfloor D \rfloor=3$ (upper figure). 
 If $\lfloor d(\tau) \rfloor = 2$, thus $\lfloor d(\tau) \rfloor$ is even, 
  then $x_{\tau+1}$ is slightly left to the point $-D+\lfloor D \rfloor$, 
   i.e., $d(\tau+1)<\lfloor D \rfloor=3$ (lower figure). 
}\label{fig:3.2.2a}
\end{center}
\end{figure}

 To begin with, we claim that we may assume that $x_0 \in [-D,0)$. 
 By the definition of $f$, 
  Walker moves to left with distance at least $\frac{1}{2}$ whenever he observes the right-end. 
 Thus, Walker eventually moves into the segment $[-D,0]$. 
 It is trivial if $x_0 = 0$.

 In the following, we are concerned with the case that $x_0 \in [-D,0)$. 
 Firstly, we are concerned with the case that $d(0) \in \mathbb{Z}$. 
 Then Walker moves to right with distance one by a step, and 
  he eventually observes the right-end at time $\tau$ where $x_{\tau} = -D + \lfloor D \rfloor + 1$\footnote{
   Here we remark that $D$ is not an integer 
    since $D \in \overline{\mathbf{S}}_{\epsilon}$. 
   This is not essential. 
  }. 
 It is not difficult to observe that $d(\tau) = 2D-\lfloor D \rfloor + 1$. 
 Thus, Lemmas~\ref{lem:epsilon-omit-midpoint} and~\ref{lem:epsilon-omit-parity} imply that 
\begin{align}
 &x_{\tau} - 1 + \frac{d(\tau)-\lfloor d(\tau) \rfloor}{2} = 0 
 && (\mbox{if $\lfloor D \rfloor \not\equiv \lfloor d(\tau) \rfloor \pmod{2}$})
 \label{eq:epsilon-omit-tau1}\\
 &x_{\tau} - 1 + \frac{d(\tau)-\lfloor d(\tau) \rfloor}{2} +\frac{1}{2} = 0 
 && (\mbox{if $\lfloor D \rfloor \equiv \lfloor d(\tau) \rfloor \pmod{2}$})
\label{eq:epsilon-omit-tau2}
\end{align}
 where $0$ is the goal of the problem. 
 If $d(\tau) \in \mathbb{Z}$ then we observe from \eqref{eq:epsilon-omit-tau2} that 
   $x_{\tau+1} = x_{\tau} -\tfrac{1}{2} = 0$ 
   according to the function $f$. 
 Otherwise, 
  Walker moves to 
\begin{eqnarray}
  x_{\tau+1} 
  = \begin{cases}
  -D + \lfloor D \rfloor - \dfrac{\epsilon}{2} \cdotp \dfrac{d(\tau)-\lfloor d(\tau) \rfloor}{2} 
  &(\mbox{if $\lfloor d(\tau) \rfloor$ is even}) \\
  -D + \lfloor D \rfloor + \dfrac{\epsilon}{2} \cdotp \dfrac{d(\tau)-\lfloor d(\tau) \rfloor}{2}
  &(\mbox{if $\lfloor d(\tau) \rfloor$ is odd}) \\
\end{cases}
\end{eqnarray}
   according to $f$ (see Figure~\ref{fig:3.2.2a}). 
 It is easy to see that $x_{\tau+1} < 0$ holds when $\lfloor d(\tau) \rfloor$ is even, 
  and also $x_{\tau+1} < 0$ holds when $\lfloor d(\tau) \rfloor$ is odd  
  since 
  \eqref{eq:epsilon-omit-tau1} and \eqref{eq:epsilon-omit-tau2}
  with the fact that 
   $\frac{\epsilon}{2} \cdotp \frac{d(\tau)-\lfloor d(\tau) \rfloor}{2} < \frac{d(\tau)-\lfloor d(\tau) \rfloor}{2}$. 
 It is also easy to see that $d({\tau+1}) \in H_{\epsilon/4}^{\pm}$, and 
  then Walker moves to $x_{\tau+2} = 0$ according to $f$ by \eqref{eq:epsilon-omit-tau1} and \eqref{eq:epsilon-omit-tau2}. 
 Thus, we obtain the desired situation in this case. 
  
 If $d(0) \not\in \mathbb{Z}$ and $d(0) \not\in H_{\epsilon/4}^{\pm}$, then 
  $d(1) \in \mathbb{Z}$ and the case is easily reduced to the above.

 In the rest of the proof, 
   we are concerned with the remaining case, that is $d(0) \in H_{\epsilon/4}^{\pm}$. 
 For convenience, 
  let $x^* = -d(0) + [d(0)]$, 
  i.e., 
  $x^* = x_0 + \Delta(d(0))$ if $d(0) \in H_{\epsilon/4}^-$, and 
  $x^* = x_0 - \Delta(d(0))$ if $d(0) \in H_{\epsilon/4}^+$. 
 Firstly, 
  we observe from the definition of $f$ that 
\begin{eqnarray}
  x^* + \epsilon \leq x_1 \leq x^* + 1-\epsilon
\label{tmp180921a}
\end{eqnarray}
  hold by the definition of $f$. 
 If $\SIDE(1)=L$ then 
 \eqref{tmp180921a} implies that $d(1) \not \in H_{\epsilon/4}^{\pm}$, and hence 
   the case is reduced to one of the cases above discussed. 
 Suppose $\SIDE(1)=R$. 
 Then $x_2$ satisfies 
\begin{eqnarray}
 x_1-1-\frac{\epsilon}{4}  \leq x_2 \leq x_1-1+\frac{\epsilon}{4} 
\label{tmp180921b}
\end{eqnarray} 
 according to $f$. 
 Combining \eqref{tmp180921a} with \eqref{tmp180921b}, we obtain 
\begin{eqnarray}
 x^*-1+\frac{3}{4}\epsilon \leq x_2 \leq x^*-\frac{3}{4}\epsilon 
\label{tmp180921c}
\end{eqnarray} 
  which implies $\SIDE(2)=L$. 
 Besides, 
  \eqref{tmp180921c} also implies that 
   $\frac{3}{4} \epsilon \leq |d(2) - \lfloor d(2) \rfloor| \leq 1-\frac{3}{4} \epsilon$ hold, 
  meaning that 
   $d(2) \not\in H_{\epsilon/4}^{\pm}$. 
 Thus, the case is also reduced to the above discusses case. 
 We obtain the claim. 
\end{proof}

\begin{algorithm}
\caption{($D - [D] \not\in (-\epsilon,\epsilon)$)}
\label{alg:epsilon-omit}
\algsetup{indent=1.5em}
\begin{algorithmic}[1]
\LOOP
\STATE observe $(\SIDE,d)$ or $\Myquit$
	\IF {$\SIDE=L$}
		\IF {$d \in \mathbb{Z}$}
			\STATE move right by $1$ 
		\ELSIF{[$d - \lceil d \rceil \in (-\frac{\epsilon}{4}+\frac{\epsilon^2}{2},0)$] and 
			    [$\lceil d \rceil$ is even]}
				\STATE move right by 
				 $\left(1+ \frac{2}{\epsilon}\right)(\lceil d \rceil - d) + \frac{1}{2}$ 
		\ELSIF{[$d - \lceil d \rceil \in (-\frac{\epsilon}{4},-\frac{\epsilon^2}{2})]$ and 
			    [$\lceil d \rceil$ is odd]}
				\STATE move right by 
				 $\left(1+ \frac{2}{\epsilon}\right)(\lceil d \rceil - d)$ 
		\ELSIF{[$d - \lfloor d \rfloor \in (\frac{\epsilon^2}{2},\frac{\epsilon}{4})$] and 
			    [$\lceil d \rceil$ is even]}
				\STATE move right by 
				 $\left(-1+ \frac{2}{\epsilon}\right)(d-\lfloor d \rfloor)$
		\ELSIF{[$d - \lfloor d \rfloor \in (0,\frac{\epsilon}{4}-\frac{\epsilon^2}{2})$] and 
			    [$\lceil d \rceil$ is even]}
				\STATE move right by 
				 $\left(-1+ \frac{2}{\epsilon}\right)(d-\lfloor d \rfloor) + \frac{1}{2}$ 
		\ELSE
				\STATE move left by $d- \lfloor d \rfloor$ 
		\ENDIF
	\ELSIF {$\SIDE=R$}
		\IF {$d \in \mathbb{Z}$}
	  	  	  \STATE move left by $\frac{1}{2}$
		\ELSIF {$\lfloor d \rfloor$ is even}
	  	  	  \STATE move left by $1 - \frac{(d-\lfloor d \rfloor)\epsilon}{4}$
		\ELSE
		  \STATE move left by $1 + \frac{(d-\lfloor d \rfloor)\epsilon}{4}$
		\ENDIF
	\ELSE
	  \STATE (i.e., $\Myquit$ is observed) stay there
	\ENDIF
\ENDLOOP
\end{algorithmic}
\end{algorithm}

\subsection{$D$ chosen from almost everywhere in $\mathbb{R}_{\geq 1}$}\label{sec:nonCantor}
 Section~\ref{sec:nonCantor} briefly mentions to another interesting example, 
   where $D$ ($D \geq 1$) is chosen from almost everywhere. 
 The {\em Cantor set} $\mathbf{T} \subset [0,1]$ is given by 
\begin{eqnarray}
 \mathbf{T} = \left\{ x \in \mathbb{R} \ \middle|\ x = \sum_{i=1}^{\infty} d_i 3^{-i} \mbox{ where $d_i \in \{0,2\}$ for $i=1,2,\ldots$} \right\}. 
\end{eqnarray}
 Extending $\mathbf{T}$ to reals other than $[0,1]$, we define 
\begin{eqnarray}
 \mathbf{T}_{\rm ex} = \left\{ x \in \mathbb{R} \ \middle|\ x - \lfloor x \rfloor \in \mathbf{T} \right\}. 
\end{eqnarray}
 In this paper, we say $r \in \mathbb{R}$ is a {\em Cantor real}\/\footnote{
 We specially remark that 
  the condition is different from 
   $r \in \left\{ x \in \mathbb{R} \ \middle|\ x = \sum_{i=-\infty}^{\infty} d_i 3^{-i} 
  \mbox{ where $d_i \in \{0,2\}$ for $i\in \mathbb{Z}$}\right\}$, which is also called Cantor real in some context. 
 } 
 if $r \in \mathbf{T}_{\rm ex}$. 
  We are concerned with the following problem. 
\begin{problem}[$D$ is {\em not} a Cantor real]\label{prob:nonCantor}
 As given 
  the observation function $\phi \colon \mathbb{R} \times [-D,D] \to \mathcal{O}$ defined by \eqref{def:phi}, 
 the goal of the problem is to design a transition map 
  $f \colon \mathcal{O} \to [-D,D]$ 
  for which the potential function $\Psi(D,x)$, defined by \eqref{def:Psi}, is bounded 
  for any $D$ ($1 < D < \infty$) such that \underline{$D \not\in \mathbf{T}_{\rm ex}$} and any real $x \in [-D,D]$. 
\end{problem}
 It is known that the Lebesgue measure of the Cantor set is zero, 
  which implies that the Lebesgue measure of $[0,1] \setminus \mathbf{T}$ is equal to $1$. 
\begin{theorem}\label{thm:nonCantor}
 Problem~\ref{prob:nonCantor} is solvable. 
\end{theorem}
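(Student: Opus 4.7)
The plan is to extend the construction from the proof of Theorem~\ref{thm:epsilon-omit} by replacing its single-scale, $\epsilon$-based dichotomy with a multi-scale ternary dichotomy keyed to the Cantor construction. The hypothesis $D \not\in \mathbf{T}_{\rm ex}$ is equivalent to $\{D\} := D - \lfloor D \rfloor$ lying in $[0,1] \setminus \mathbf{T}$, and the latter is the countable open union $\bigcup_{n \geq 1} M_n$, where $M_n$ is the set of $2^{n-1}$ open middle-third intervals of length $3^{-n}$ removed at the $n$-th stage of the Cantor construction. Consequently there is a smallest $n^*(D) \geq 1$ for which $\{D\}$ belongs to one of these removed open intervals, and the algorithm must discover $n^*(D)$ and the specific interval within finitely many steps.

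The skeleton of the algorithm follows the proof of Theorem~\ref{thm:epsilon-omit} verbatim: a ``reset'' branch drives Walker to the left end (via a move of length $d - \lfloor d \rfloor$ when the observation matches none of the nontrivial clauses); he then marches right by unit steps until he first crosses the midpoint at step $\tau = \lfloor D \rfloor + 1$, observing $d(\tau) = 2D - \lfloor D \rfloor - 1$. By Lemmas~\ref{lem:epsilon-omit-midpoint} and~\ref{lem:epsilon-omit-parity}, the pair $(\lfloor D \rfloor, d(\tau))$ together with the parity of $\lfloor d(\tau) \rfloor$ still determines $D$. The carefully-crafted modification is to subdivide the $(L,\cdot)$ and $(R,\cdot)$ clauses of $f$ into a countable family indexed by $n$ and by the particular interval $I \subset M_n$: when $d$ lies in the ``signature window'' of some $I \subset M_n$, $f$ emits a finishing move that sends Walker to $0$ (this move is a rational combination of $d$, $\lfloor d \rfloor$, and the rational endpoints $a/3^n$ of $I$, derived from Lemma~\ref{lem:epsilon-omit-midpoint}); when $d$ falls outside every signature window up to level $n$, $f$ emits a probing move that translates Walker by an offset chosen so that the next observed $d$ becomes a signature of $\{D\}$ at resolution $3^{-(n+1)}$, mimicking the ternary shift $y \mapsto 3y \bmod 1$ within the observation. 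Since $n^*(D)$ is finite, some finishing clause fires within $O(n^*(D))$ probing rounds after $\tau$, and Walker halts at $0$.

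The main obstacle is ensuring that the countable family of signature windows is pairwise disjoint and that each probing move lands Walker in the correct window for the next level rather than accidentally in a finishing window of a wrong level. Because the Cantor endpoints at stage $n$ are rationals with denominator $3^n$, inserting irrational offsets into the probing step (playing a role analogous to $1/{\rm e}$ in Section~\ref{sec:algebraic} or $1/\pi_k$ in Section~\ref{sec:converge}) separates the probing and finishing regimes cleanly. Verifying this pairwise disjointness across all levels, confirming that the induced motion stays within $[-D,D]$, and checking that each probing step strictly increments the level index (so the sequence of probes cannot loop) constitute the principal bookkeeping burden of the proof, analogous to the roles played by Proposition~\ref{prop0} and Lemma~\ref{algebraic} in the earlier theorems.
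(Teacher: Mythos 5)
Your proposal takes a genuinely different route from the paper's, and the difference is not cosmetic: the paper never tries to ``discover $n^*(D)$'' or descend level by level through the Cantor construction. In the paper, once Walker first observes the right end at time $\tau$ he already holds $d(\tau)=2D-\lfloor D\rfloor-1$ exactly, and by Lemmas~\ref{lem:epsilon-omit-midpoint} and~\ref{lem:epsilon-omit-parity} the only missing datum is a single parity bit. The hypothesis $D\notin\mathbf{T}_{\rm ex}$ is spent in \emph{one} step, to forge an unforgeable signature: two injections $h_{\rm e},h_{\rm o}\colon[0,\tfrac{1}{2}]\to\mathbf{T}$ with disjoint ranges write the exact value $\tfrac{d(\tau)-\lfloor d(\tau)\rfloor}{2}$ \emph{and} the parity of $\lfloor d(\tau)\rfloor$ into the fractional part of the next observation, inside the Cantor set; since $\{D\}\notin\mathbf{T}$, a genuine signature is recognized and inverted on the very next step, and an expansion estimate (Lemma~\ref{lem:contract1/2}) forces adversarial initial positions that accidentally resemble signatures to escape in finitely many rounds.

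Your multi-level probing scheme leaves two structural holes, not mere bookkeeping. First, information loss: the finishing move needs $2D$ \emph{exactly}, but membership of $\{D\}$ in an open removed interval $I\subset M_n$ of length $3^{-n}$ is only an approximation, so the exact handle on $2D$ must be the observed $d$ at the moment of finishing; yet your probe ``$y\mapsto 3y\bmod 1$ at resolution $3^{-(n+1)}$'' discards the leading ternary digit, and a memoryless walker has no place other than his position to keep either the level counter $n$ or the digits already consumed. You never explain how the exact observation from time $\tau$ survives $O(n^*(D))$ probing rounds. Second, the disjointness you defer is the crux and your fix fails quantitatively: the observation windows at level $n$ must accommodate all $\{D\}$ surviving to that level, sets whose measures under a measure-inflating ternary shift sum to more than $1$, so infinitely many of them cannot be translated (by irrational offsets or otherwise) into pairwise disjoint subsets of a unit-length range of fractional parts. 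The paper evades exactly this by making every signature land in the measure-zero set $\mathbf{T}$ while $\{D\}$ provably lies outside it. Finally, you do not treat the self-stabilization case where $x_0$ itself accidentally matches a finishing or probing window yet was not produced by the algorithm; handling such starts occupies roughly half of the paper's proof.
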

 The proof of Theorem~\ref{thm:nonCantor} is similar to that of Theorem~\ref{thm:epsilon-omit}, 
  but it needs to be more carefully crafted. 
 See Section~\ref{apx:nonCantor}.

\subsection{Remarks on Section~\ref{sec:restriction}}
 Let $\mathbf{S}_{\epsilon} = \{ x \in \mathbb{R} \mid x-[x] \in (-\epsilon,\epsilon) \}$ for an arbitrary small $\epsilon$ ($0<\epsilon<1/2$) where $[x]$ for any real $x$ denotes an integer minimizing $|x - [x]|$. 
 We have shown the solvability 
  when $D$ is NOT in $\mathbf{S}_{\epsilon}$ in Section~\ref{sec:epsilon-omit}, and 
  when $D$ is NOT in $\mathbf{T}_{\rm ex}$ in Section~\ref{sec:nonCantor}. 
 We remark that their compliment cases, 
   precisely Problems~\ref{prob:epsilon} and \ref{prob:Cantor} described below, are also solvable. 
\begin{problem}[$D$ in $\mathbf{S}_{\epsilon}$]\label{prob:epsilon}
 As given 
  the observation function $\phi \colon \mathbb{R} \times [-D,D] \to \mathcal{O}$ defined by \eqref{def:phi}, 
 the goal of the problem is to design a transition map 
  $f \colon \mathcal{O} \to [-D,D]$ 
  for which the potential function $\Psi(D,x)$, defined by \eqref{def:Psi}, is bounded 
  for any $D$ ($1 < D < \infty$) such that \underline{$D$ in $\mathbf{S}_{\epsilon}$} and any real $x \in [-D,D]$. 
\end{problem}
\begin{problem}[$D$ is a Cantor real]\label{prob:Cantor}
 As given 
  the observation function $\phi \colon \mathbb{R} \times [-D,D] \to \mathcal{O}$ defined by \eqref{def:phi}, 
 the goal of the problem is to design a transition map 
  $f \colon \mathcal{O} \to [-D,D]$ 
  for which the potential function $\Psi(D,x)$, defined by \eqref{def:Psi}, is bounded 
  for any $D$ ($1 < D < \infty$) such that \underline{$D \in \mathbf{T}_{\rm ex}$} and any real $x \in [-D,D]$. 
\end{problem}

 The location problem might be always solvable 
  if an uncountable set is excluded from possible $D$. 
 More precisely, the solvability of the following general problem is unsettled, 
   where we conjecture it is essentially true. 
\begin{problem}[$D$ is not in an uncountable set]\label{prob:uncountable}
 Let $S \subseteq \mathbb{R}_{\geq 1}$ be an uncountable set. 
 As given 
  the observation function $\phi \colon \mathbb{R} \times [-D,D] \to \mathcal{O}$ defined by \eqref{def:phi}, 
 the goal of the problem is to design a transition map 
  $f \colon \mathcal{O} \to [-D,D]$ 
  for which the potential function $\Psi(D,x)$, defined by \eqref{def:Psi}, is bounded 
  for any $D$ ($1 < D < \infty$) such that \underline{$D \not\in \mathbf{S}$} and any real $x \in [-D,D]$. 
\end{problem}

 In contrast, 
   the complement of Problem~\ref{prob:algebraic} seems not solvable. 
 Even for the following easier version, 
  the solvability is unsettled. 
\begin{problem}[Irrational $D$]\label{prob:irrational}
 As given 
  the observation function $\phi \colon \mathbb{R} \times [-D,D] \to \mathcal{O}$ defined by \eqref{def:phi}, 
 the goal of the problem is to design a transition map 
  $f \colon \mathcal{O} \to [-D,D]$ 
  for which the potential function $\Psi(D,x)$, defined by \eqref{def:Psi}, is bounded 
  for any \underline{irrational} $D$ ($1 < D < \infty$) and any real $x \in [-D,D]$. 
\end{problem}

\section{Relaxation 3: With a Single-bit Memory}\label{sec:memory}
 Memoryless is definitely a property which makes the problem difficult  
  because Problem~\ref{prob:original} is easily solved if Walker has enough memory (recall Section~\ref{sec:intro}). 
 Interestingly, this section shows that 
  only a single-bit memory is sufficient for a {\em self-stabilizing} localization of the midpoint. 
 The problem, with which this section is concerned, is formally described as follows. 
\begin{problem}[With a single-bit memory]\label{prob:memory}
 As given 
  the observation function $\phi \colon \mathbb{R} \times [-D,D] \to \mathcal{O}$ defined by \eqref{def:phi}, 
 the goal of the problem is to design a transition map \underline{with memory} 
  $f \colon \mathcal{O} \times \{0,1\} \to [-D,D] \times \{0,1\}$ 
  for which an integer $n$ ($0 \leq n <\infty$) exists 
 for any real $D$ ($1<D<\infty$), real $x_0 \in [-D,D]$ and $b_0 \in \{0,1\}$
  such that $x_n=0$ 
   where $(x_{i+1},b_{i+1}) = f(\phi(D,x_i),b_i)$ for $i=0,1,2,\ldots$. 
\end{problem}
\begin{theorem}\label{thm:memory}
Problem~\ref{prob:memory} is solvable. 
\end{theorem}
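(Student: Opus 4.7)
The plan is to give a constructive proof by designing an explicit transition map $f \colon \mathcal{O} \times \{0,1\} \to [-D,D] \times \{0,1\}$ that drives Walker to $x = 0$ from any starting pair $(x_0,b_0)$. The key observation is that if Walker reaches the left end $-D$ and then steps right by $1$ at each subsequent L-observation, then at the first R-observation he sees $d = 2D - \lfloor D\rfloor - 1$, and by Lemma~\ref{lem:epsilon-omit-parity} the correct left-displacement $1 - (D - \lfloor D\rfloor)$ to the midpoint can be computed from $d$ provided the parity of $\lfloor D \rfloor$ is also known. The single bit $b$ is exactly large enough to carry this one extra piece of information that was missing in the memoryless setting of Problem~\ref{prob:original}.

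The core sweep rule is: at any $(L,d)$ with integer $d$, Walker moves right by $1$ and writes $b := d \bmod 2$; at any $(L,d)$ with non-integer $d$, Walker moves left by $d$ to reach the left end in one step (returning to the canonical ``integer grid''). Along any sweep originating at $-D$, the bit is overwritten at every L-step, so at the last L-step (where $d = \lfloor D\rfloor$) the bit becomes $\lfloor D\rfloor \bmod 2$. At the first R-observation, Walker compares $b$ with $\lfloor d\rfloor \bmod 2$ and jumps left by either $(1 - (d - \lfloor d \rfloor))/2$ (if the parities match) or $(2 - (d - \lfloor d \rfloor))/2$ (if they differ); by Lemma~\ref{lem:epsilon-omit-parity} Walker then lands exactly at $x = 0$.

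To make the algorithm self-stabilizing, I would specify the behavior at $(R,d,b)$ observations that do not come from a freshly completed sweep, including any R-side starting position. For these, one value of $b$ at R-side is reserved as a ``recovery'' flag and triggers a deterministic move (for example, move right by $d$ to reach the right end, or a doubling move) that provably routes Walker back into the L-side within $O(\log D)$ steps; the other value of $b$ at R-side triggers the jump formula above. Once Walker is back in the L-side, the transitions defined in the previous paragraph restore him to the canonical $(L,0,\cdot)$ state, from which the clean sweep of length $\lfloor D\rfloor + 1$ followed by the jump terminates correctly.

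The main obstacle is the dual role the single bit must play at R-side observations: its value must both signal the current phase (``in-sweep jump'' versus ``off-trajectory recovery'') and, at the end of a valid sweep, carry $\lfloor D\rfloor \bmod 2$. The resolution is to arrange the L-side sweep rule so that the bit at the final L-step automatically takes the value used by the ``jump'' phase, and to rely on the observation type (integer vs.\ non-integer $d$, L-side vs.\ R-side) rather than the bit alone to select the right transition. The termination proof is then a finite case analysis on $(x_0,b_0)$: Walker reaches the canonical L-side configuration within $O(\log D)$ steps, after which the sweep-and-jump terminates in $\lfloor D\rfloor + 2$ further steps.
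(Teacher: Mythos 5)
Your sweep-and-jump core is exactly the paper's construction: step right by one unit on integer $(L,d)$ observations while writing the parity of the current distance from the left end into $b$, reset to the integer grid on non-integer $(L,d)$, and at the first $R$-observation after a clean sweep use Lemmas~\ref{lem:epsilon-omit-midpoint} and~\ref{lem:epsilon-omit-parity} to turn the comparison of $b$ with $\lfloor d\rfloor \bmod 2$ into the exact displacement to $x=0$. Your two jump displacements $\tfrac{1-(d-\lfloor d\rfloor)}{2}$ and $\tfrac{2-(d-\lfloor d\rfloor)}{2}$ are correct for your bit convention.

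The gap is in the self-stabilization half, which is the substance of Problem~\ref{prob:memory} since $(x_0,b_0)$ is adversarial. You propose to reserve one value of $b$ at $R$-observations as a ``recovery'' flag, but your jump rule already consumes both values of $b$ (it branches on whether $b\equiv\lfloor d\rfloor\pmod 2$), so neither value is free to reserve; and the fallback of ``relying on the observation type'' cannot disambiguate either, because the transition map is a single function of $(\SIDE,d,b)$ and an adversarial start on the right side can present exactly the same triple as a legitimately completed sweep (for a different $D$), forcing the map to execute the jump there. So as written the right-side behavior is over-determined, and no termination argument from $x_0>0$ is given. The paper resolves the dual-role tension differently: it has no recovery mode at all --- the jump rule is applied at every $R$-observation, both branches move Walker left by an amount in $[0,1]$, so a ``wrong'' jump is harmless and Walker drifts toward the left side. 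Since the short branch can move left by arbitrarily little, a progress argument is still needed; the paper gets one from the fact that $b$ is flipped at every $R$-step, so if the short branch fires three times in a row the parity of $\lfloor d(t)\rfloor$ must alternate, forcing $\lfloor d(t+2)\rfloor\geq\lfloor d(t)\rfloor+2$ and hence $x_{t+3}\leq x_t-\tfrac{1}{2}$. Your proposal contains neither this mechanism nor a workable substitute, so the claim that Walker reaches the canonical left-side configuration from an arbitrary start remains unproved.
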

\begin{proof}
 The proof is constructive. 
 We define a transition map $f\colon\mathcal{O} \times \{0,1\} \to [-D,D] \times\{0,1\}$ 
  to solve Problem~\ref{prob:memory} by 
\begin{eqnarray*}
f((L,d),b) &=&\begin{cases}
(x+1,(d+1)\bmod 2)
& \mbox{if $d \in \mathbb{Z}_{\geq 0}$} \\
(x-d+\lfloor d \rfloor, \lfloor d \rfloor\bmod{2}) 
& \mbox{if $d \not\in \mathbb{Z}_{\geq 0}$}\\
\end{cases}\\ 
 f((R,d),b)) &=&\begin{cases}
\left(x-1+\tfrac{d-\lfloor d \rfloor}{2},(b+1) \bmod 2 \right) 
& \mbox{if $b \not\equiv \lfloor d \rfloor \pmod 2$}\\[2ex]
\left(x-1+\tfrac{d-\lfloor d \rfloor+1}{2},(b+1) \bmod 2\right) 
& \mbox{if $b \equiv \lfloor d \rfloor \pmod 2$}\\
\end{cases}\\
 f(\Myquit,b) &=& (x,b)
\end{eqnarray*}
   in each case of $\phi(D,x) = (L,d)$, $(R,d)$ or $\Myquit$
  for any $x \in [-D,D]$ (see also Algorithm~\ref{alg:memory}). 
 It is not difficult to observe that 
   $f$ is a transition map (recall Problem~\ref{prob:original}). 

 First, we show for any $x_0 \in [-D,0)$ that 
  a finite $n \in \mathbb{Z}_{> 0}$ exists such that $x_n=0$ 
  where $(x_t,b_t) = f(\phi(D,x_{t-1}),b_{t-1})$ for $t=1,2,\ldots$. 
 For convenience, 
  let $(\SIDE(t),d(t))=\phi(D,x_t)$. 
 Let $\tau = \min\{t \in \mathbb{Z}_{\geq 0} \mid \SIDE(t)=R \}$. 
 Then, we observe that $x_{\tau} = -D + \lfloor D \rfloor + 1$, 
 and hence $d(\tau)=D-x_{\tau} = D -(-D+\lfloor D\rfloor + 1) = 2D -\lfloor D\rfloor - 1$. 
Note that $b_{\tau} \equiv \lfloor D \rfloor + 1 \pmod{2}$ holds at that time. 
 Now it is not difficult 
   from Lemmas~\ref{lem:epsilon-omit-midpoint} and~\ref{lem:epsilon-omit-parity} 
  to observe that $x_{\tau+1}=0$ according to $f$. 

 Next, we claim that if $\SIDE(t)=R$ then there is $t'$ ($t' > t$) such that  $\SIDE(t')=L$ or $x_{t'}=0$, 
  meaning that it is reduced to the case $x_0 \leq 0$. 
 In fact, we show that $x_{t+3} \leq x_t-\frac{1}{2}$ holds for any $t$ as long as $\SIDE(t)=\SIDE(t+1)=\SIDE(t+2)=R$, 
  and hence it implies the claim. 
 We remark that $x_{t+1} \leq x_t$ holds when  $\SIDE(t)=R$ by the definition of the transition map $f$.  
 Suppose $\SIDE(t)=\SIDE(t+1)=\SIDE(t+2)=R$. 
 In the case that $b(s) \not\equiv \lfloor d(s) \rfloor \pmod{2}$ holds for some $s \in \{t,t+1,t+2\}$, 
  then $x_{s+1} = D - \frac{d(s) +\lfloor d(s) \rfloor+1}{2} \leq D-d(s) - \frac{1}{2}= x(s) - \frac{1}{2}$, 
   and we obtain the claim in the case. 
 In the other case, i.e., $b(s) \equiv \lfloor d(s) \rfloor  \pmod{2}$ holds for  each $s \in \{t,t+1,t+2\}$. 
 Since the parities of $b(t)$, $b(t+1)$ and $b(t+2)$ alternately changes, 
   the parities of $\lfloor d(t) \rfloor$, $\lfloor d(t+1) \rfloor$ and $\lfloor d(t+2) \rfloor$ alternately changes, too.
 This implies 
   $\lfloor d(t) \rfloor \equiv \lfloor d(t+2) \rfloor \pmod{2}$ but 
   $\lfloor d(t) \rfloor \neq \lfloor d(t+2) \rfloor $. 
 Accordingly, $d(t+2) -d(t) >1 $ holds in the case. 
 we obtain the claim. 
\end{proof}

\begin{algorithm}
\caption{(with a single-bit memory)}
\label{alg:memory}
\algsetup{indent=1.5em}
\begin{algorithmic}[1]
\STATE given initial memory bit $b \in \{0,1\}$ (adversarially) arbitrarily
\LOOP
\STATE observe $(\SIDE,d)$ or $\Myquit$
	\IF {$\SIDE=L$}
		\IF {$d \in \mathbb{Z}$}
		  	\STATE move right by $1$ 
		  	\STATE set $b := d+1 \pmod 2$
		\ELSE 
       	\STATE move to the left-end 
		  	\STATE set $b := 0$
		\ENDIF
	\ELSIF {$\SIDE=R$}
		\IF {$b \equiv \lfloor d  \rfloor \pmod{2}$}
       	\STATE move left by $1-\frac{d - \lfloor d \rfloor}{2}$
       	\STATE set $b:=b+1 \pmod{2}$
    	\ELSE 
       	\STATE move left by $\frac{1}{2} - \frac{d - \lfloor d \rfloor}{2}$
       	\STATE set $b:=b+1 \pmod{2}$
    	\ENDIF
	\ELSE
	  \STATE (i.e., $\Myquit$ is observed) stay there
	\ENDIF
\ENDLOOP
\end{algorithmic}
\end{algorithm}

\section{Impossibility of Symmetric Algorithms}\label{sec:symmetric}
 We conjecture Problem~\ref{prob:original} is unsolvable under some appropriate axiomatic system. 
 This section gives an easy impossibility theorem for 
   Problem~\ref{prob:original} assuming a (very strong) condition. 
 We say a transition map is {\em symmetric} if 
   $f(\phi(D,-x)) = -f(\phi(D,x))$ holds for any $x \in [-D,D]$ and for any $D \in \mathbb{R}$. 

\begin{theorem}
 No symmetric algorithm solves Problem~\ref{prob:original}. 
\end{theorem}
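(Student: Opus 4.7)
The plan is to extract from the symmetry hypothesis $f(\phi(D,-x)) = -f(\phi(D,x))$ the identity $M((L,d)) = -M((R,d))$ on motions (writing, as in the paper, $M(o) := f(o) - x$); the entire algorithm is then encoded by the single one-variable function $g(d) := M((R,d))$. I will show that, no matter how $g$ is chosen, there is an instance $(D, x_0)$ with $D > 1$ and $x_0 \in [-D, D]$ on which the trajectory avoids the midpoint forever.

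The central device is a one-step mirror oscillation: whenever Walker at a position $x_0 > 0$ steps to $-x_0$, symmetry forces Walker at $-x_0$ to step back to $x_0$, producing the 2-cycle $x_0, -x_0, x_0, -x_0, \ldots$ that never visits $0$. The step $x_0 \mapsto -x_0$ happens precisely when $g(D - x_0) = -2x_0$, and for any chosen $d > 0$ this identity is realised tautologically by setting $D := d - g(d)/2$ and $x_0 := -g(d)/2$; the only remaining side conditions are $g(d) < 0$ (so that $x_0 > 0$) and $D > 1$ (so that the instance is admissible). I then split into two cases. If there exists some $d \geq 1$ with $g(d) < 0$, then automatically $D = d - g(d)/2 > d \geq 1$, and this oscillation instance is done. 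Otherwise $g(d) \geq 0$ for all $d \geq 1$, and I instead take $(D, x_0) = (3, 2)$ and argue that the interval $[1, 3]$ is invariant: for $x \in [1, 2]$ one has $d = 3 - x \geq 1$ so $g(d) \geq 0$ and $x + g(d) \geq 1$; for $x \in (2, 3]$ one uses the lower bound $g(d) \geq d - 2$, which follows from the requirement $f(\phi(D,x)) \in [-D, D]$ applied with $D$ approaching $1$ from above, to conclude $x + g(d) \geq x + d - 2 = 1$. Hence Walker is trapped in $[1, 3]$ and $x_t \neq 0$ for all $t$.

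The main obstacle, and the reason the case split is necessary at all, is the problem's admissibility condition $D > 1$: without it the oscillation instance alone would finish the proof in a single line. The arithmetic step that has to be done carefully is the uniform lower bound $g(d) \geq d - 2\max(1, d)$, obtained by taking the supremum of $d - 2D$ over the admissible range $D > \max(1, d)$ of starting configurations with observation $(R, d)$; once this bound is recorded, the trapping argument for $(D, x_0) = (3, 2)$ is essentially routine.
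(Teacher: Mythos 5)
Your proof is correct, and the engine driving it is the same one the paper uses: symmetry forces any point that is ever mapped to its mirror image into the perpetual $2$-cycle $x_0 \to -x_0 \to x_0 \to \cdots$. Where you diverge is in how the bad instance is produced. The paper extracts from solvability a point $x^*>0$ with $f(\phi(D,x^*))=0$, then uses the translation invariance $\phi(D-u,x-u)=\phi(D,x)$ to shift the whole picture by $x^*/2$, so that $x^*$ becomes $x^*/2$ and its image $0$ becomes $-x^*/2$; unwinding this, the shifted instance is exactly your $D=d-g(d)/2$, $x_0=-g(d)/2$ with $d=D-x^*$. You instead work directly with the motion function $g(d)=M((R,d))$ and split on whether some $d\geq 1$ has $g(d)<0$. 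This buys you two things the paper's write-up glosses over: (i) the admissibility constraint $D>1$ is handled explicitly (the paper's shifted length $D-x^*/2$ exceeds $1$ only if one remembers to start from, say, $D>2$, which the paper does not state), and (ii) the argument does not need the ``last step jumps to the midpoint'' extraction at all. The price is your Case~2, where no $d\geq 1$ has $g(d)<0$ and you must exhibit failure by a different mechanism; your trapping argument for $(D,x_0)=(3,2)$ is sound --- the bound $g(d)\geq d-2$ for $d<1$ does follow from $f(\phi(D',x'))\geq -D'$ at $x'=D'-d$ for all $D'>1$, and together with $g(d)\geq 0$ for $d\geq 1$ and the built-in upper bound $f(\phi(3,x))\leq 3$ it makes $[1,3]$ forward-invariant, so the midpoint is never reached. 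In short: same key lemma, a more careful and slightly longer derivation of the contradiction.
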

\begin{proof}
 Assume for a contradiction that $f$ is a symmetric transition map which solves Problem~\ref{prob:original}. 
 Then, 
  there is $x^* \in [-D,D] \setminus \{0\}$ such that  $f(\phi(D,x^*)) = 0$, meaning that $\Psi(D,x^*)=1$. 
 Since $f$ is symmetric, $f(\phi(D,-x^*))=-f(\phi(D,x^*))=0$ holds, too. 
 Thus, We may assume $x^* > 0$ without loss of generality. 

 Here, we remark on the observation function 
   that $\phi(D-u,x-u) = \phi(D,x)$ holds for any $D$, $x$ and $u$ ($u < x$) when $x > 0$, as well as 
   that $\phi(D-u,x+u) = \phi(D,x)$ when $x<0$. 
 Since $f$ is a transition map, meaning that $f(\phi(D,x))-x$ is independent of $x$, 
\begin{align}
 f\left(\phi\left(D-\tfrac{x^*}{2},x^*-\tfrac{x^*}{2}\right)\right)-\left(x^*-\frac{x^*}{2}\right) 
  = f\left(\phi(D,x^*)\right)-\frac{x^*}{2} 
  = -\frac{x^*}{2} 
\label{tmp170710a}
\end{align}
 holds by the assumption $f(\phi(D,x^*))=0$. 
 On the other hand, 
\begin{align}
 f\left(\phi\left(D-\tfrac{x^*}{2},-\tfrac{x^*}{2}\right)\right) 
  = -f\left(\phi\left(D-\tfrac{x^*}{2},\tfrac{x^*}{2}\right)\right) 
  = \frac{x^*}{2}
\label{tmp170710b}
\end{align}
 holds since the assumption that $f$ is symmetric. 
 It is  not difficult to see that 
  \eqref{tmp170710a} and \eqref{tmp170710b} 
  imply $\Psi(D-\frac{x^*}{2},\frac{x^*}{2})=\Psi(D-\frac{x^*}{2},-\frac{x^*}{2})=\infty$. 
 Contradiction~(see Figure~\ref{Fimposs}).   
\end{proof}
\begin{figure}[tb]
\begin{center}
\includegraphics[width=70mm]{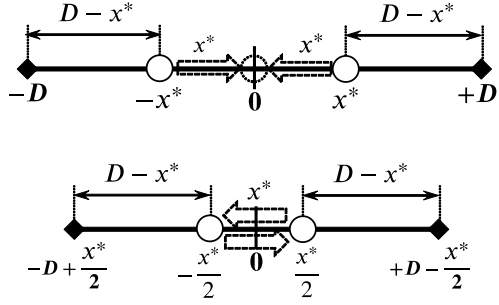}
\caption{Impossibility by a symmetric algorithm  
}
\label{Fimposs}
\end{center}
\end{figure}

\section{Concluding Remark}\label{sec:conclude}
 Motivated by the theoretical difficulty of self-stabilization of 
   autonomous mobile robots with limited visibility, 
  this paper is concerned with a very simple localization problem. 
 The techniques used in Sections~\ref{sec:converge} and~\ref{sec:restriction}
   are theoretically interesting, and 
   may indicate why the impossibility proofs of this topic are often difficult. 
 On the other hand, 
   the parity tricks used in Section~\ref{sec:memory} for a robot with a single-bit memory 
   could be reasonably simple and practically useful. 

 Problem~\ref{prob:original} remains as unsettled, and 
   we conjecture that it is unsolvable under some appropriate axiom system. 
 There are many possible variants of Problem~\ref{prob:original}. 
 A mathematically interesting version is a restriction to the rational interval, formally described as follows. 
\begin{problem}[Rational domain]\label{prob:Q}
 As given 
   an observation function $\phi \colon \mathbb{Q} \times \underline{[-D,D]_{\mathbb{Q}}} \to \mathcal{O}$, 
  the goal is to design a {\em rational} transition map 
   $f \colon \mathcal{O} \to [-D,D]_{\mathbb{Q}}$ 
   such that the potential function $\Psi(D,x)$ is bounded 
  for any {\em rational} $D$ ($1 < D < \infty$), and 
  {\em rational} $x \in [-D,D]_{\mathbb{Q}}$, 
 where $[-D,D]_{\mathbb{Q}}$ denotes $[-D,D] \cap \mathbb{Q}$. 
\end{problem}
\noindent
 For the version, a diagonal argument might work. 
 It is also open 
  if Problems~\ref{prob:uncountable} and \ref{prob:irrational} are respectively solvable. 

 Clearly, 
  self-stabilizing coverage, spreading, pattern formation etc.\ 
  by {\em many robots with limited visibility} 
  are important future works. 


\section*{Acknowledgement}
A preliminary version appeared in \cite{mon17}\footnote{ 
  Section~\ref{sec:converge} is simplified from our preliminary version~\cite{mon17}.
  Section~\ref{sec:algebraic} is an enhancement of our preliminary version~\cite{mon17}: 
    $D$ was restricted to be rational in \cite{mon17}, while $D$ is algebraic real in this paper. 
  Sections \ref{sec:epsilon-omit} and~\ref{sec:nonCantor} are completely new. 
   }. 
This work was/is partly supported by
JSPS KAKENHI Grant Numbers JP15K15938 and JP17K19982. 

 

\appendix

\section{$D$ almost everywhere in $\mathbb{R}_{\geq 1}$}\label{apx:nonCantor}
 The {\em Cantor set} $\mathbf{T} \subset [0,1]$ is given by 
\begin{eqnarray}
 \mathbf{T} = \left\{ x \in \mathbb{R} \ \middle|\ x = \sum_{i=1}^{\infty} d_i 3^{-i} \mbox{ where $d_i \in \{0,2\}$ for $i=1,2,\ldots$} \right\}. 
\end{eqnarray}
 Extending $\mathbf{T}$ to reals other than $[0,1]$, we define 
\begin{eqnarray}
 \mathbf{T}_{\rm ex} = \left\{ x \in \mathbb{R} \ \middle|\ x - \lfloor x \rfloor \in \mathbf{T} \right\}. 
\end{eqnarray}
 In this paper, we say $r \in \mathbb{R}$ is a {\em Cantor real} if $r \in \mathbf{T}_{\rm ex}$\footnote{
 In some context, 
   $r \in \left\{ x \in \mathbb{R} \ \middle|\ x = \sum_{i=-\infty}^{\infty} d_i 3^{-i} 
  \mbox{ where $d_i \in \{0,2\}$ for $i\in \mathbb{Z}$}\right\}$  is called Cantor real. 
 } 
 We are concerned with the following problem. 
\begin{problem}[Problem~\ref{prob:nonCantor}: $D$ is {\em not} a Cantor real]\label{apx-prob:nonCantor}
 As given 
  the observation function $\phi \colon \mathbb{R} \times [-D,D] \to \mathcal{O}$ defined by \eqref{def:phi}, 
 the goal of the problem is to design a transition map 
  $f \colon \mathcal{O} \to [-D,D]$ 
  for which the potential function $\Psi(D,x)$, defined by \eqref{def:Psi}, is bounded 
  for any $D$ ($1 < D < \infty$) such that \underline{$D \not\in \mathbf{T}_{\rm ex}$} and any real $x \in [-D,D]$. 
\end{problem}
 Notice that the cardinality of $\mathbb{R} \setminus \mathbf{T}_{\rm ex}$ is equal to that of $\mathbb{R}$. 
 Furthermore, almost all reals are not Cantor reals; 
  more precisely, the Lebesgue measure of $[0,1] \setminus \mathbf{T}$ is equal to $1$. 
\begin{theorem}\label{apx-thm:nonCantor}
 Problem~\ref{apx-prob:nonCantor} is solvable. 
\end{theorem}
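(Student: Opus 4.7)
Plan. The plan is to generalize the two-phase construction from the proof of Theorem~\ref{thm:epsilon-omit} to a countable hierarchy indexed by the ternary-refinement depth of the Cantor set. The key reformulation is: writing $\{D\} := D - \lfloor D \rfloor$, we have $\{D\} \notin \mathbf{T}$ if and only if there is a finite $k^\star \geq 1$ for which the $k^\star$-th ternary digit of $\{D\}$ equals $1$, equivalently $3^{k^\star - 1}\{D\} \bmod 1 \in (1/3, 2/3)$. Walker's algorithm must locate this witnessing depth without knowing it in advance and without memory.

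First I would reuse the ``march right by one'' phase from Theorem~\ref{thm:epsilon-omit}: iterating this drives Walker (from any $d(0) \in \mathbb{Z}$) to the canonical state $\SIDE = R$, $x_\tau = -D + \lfloor D \rfloor + 1$, at which point $d(\tau) = 2D - \lfloor D \rfloor - 1$ encodes $\lfloor D \rfloor$ (up to the parity trick of Lemma~\ref{lem:epsilon-omit-parity}) together with $\{2D\}$. Then I would replace the single almost-integer band $H^{\pm}_{\epsilon/4}$ by a countable family of pairwise disjoint bands, one per depth $k \geq 1$, whose widths shrink like $c \cdot 3^{-k}$ about anchor distances derived from the $2^{k-1}$ middle-third gaps removed at stage $k$ of the Cantor construction. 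The transition map then has a case for each (depth $k$, stage-$k$ gap, parity-of-integer-part) triple: the matching case fires a Cantor-refined analogue of Lemmas~\ref{lem:epsilon-omit-midpoint}--\ref{lem:epsilon-omit-parity} (now working modulo~$3$) to reconstruct $D$ exactly from the observed $d$ and land Walker at $0$; the non-matching case executes a ``reject'' move that provably places the next observation into the depth-$(k+1)$ band. Since $\{D\} \notin \mathbf{T}$ implies such a $k^\star$ exists, the cascade terminates after at most $\lfloor D \rfloor + k^\star + O(1)$ steps, bounding $\Psi(D, x)$.

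The main obstacle I expect is the disjointness and self-consistency of the bands across depths. I must choose the scales and anchor positions so that (i) the depth-$k$ band is disjoint from every depth-$k'$ band for $k' \neq k$, guaranteeing that the deterministic transition map selects the correct phase; and (ii) whenever Walker is in the depth-$k$ band but $\{D\}$ lies outside every stage-$k$ gap, the reject move lands the next observation in the depth-$(k+1)$ band and not, for example, in the initial-reset region. Because the number of stage-$k$ gaps grows like $2^{k-1}$, each requiring its own residue-based correction formula indexed by the preceding ternary digits of $\{D\}$ (each in $\{0,2\}$), the case analysis is strictly more delicate than the two-case parity analysis of Theorem~\ref{thm:epsilon-omit}, and will form the bulk of the proof.
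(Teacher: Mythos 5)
Your proposal takes a genuinely different route from the paper's, but as sketched it has a gap that I think is structural rather than technical. You read the hypothesis $D\notin\mathbf{T}_{\rm ex}$ through the middle-thirds structure of the complement of the Cantor set and organize the algorithm around locating the witnessing ternary depth $k^\star$ of $D-\lfloor D\rfloor$. But locating $k^\star$ does not address the actual obstruction. From the canonical state $x_\tau=-D+\lfloor D\rfloor+1$, the observation $d(\tau)=2D-\lfloor D\rfloor-1$ already determines $D$ up to exactly two candidates differing by $\tfrac{1}{2}$ (Lemma~\ref{lem:epsilon-omit-midpoint}), and the only remaining task is to choose between them, i.e.\ to learn the parity of $\lfloor D\rfloor$ (Lemma~\ref{lem:epsilon-omit-parity}). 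That dichotomy is intrinsically a base-$2$ phenomenon coming from the doubling in $d=2D-(\lfloor D\rfloor+1)$; there is no ``mod $3$'' analogue of Lemmas~\ref{lem:epsilon-omit-midpoint}--\ref{lem:epsilon-omit-parity} to fire, because adding $\tfrac{1}{2}=0.\overline{1}{}_3$ scrambles every ternary digit, so the two candidates for $D-\lfloor D\rfloor$ have unrelated ternary expansions and generically both are non-Cantor: the hypothesis singles out neither. Moreover, even if Walker knew $k^\star$ and which stage-$k^\star$ gap contains $D-\lfloor D\rfloor$, that pins $D$ down only to within $3^{-k^\star}$, not exactly; exact reconstruction still requires an opposite-end observation together with the resolved parity, so the cascade resolves neither half of the problem.

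The second, related gap is in what each depth-$k$ band must carry. For the ``matching'' case at depth $k$ to land Walker exactly at $0$, the position in that band must encode --- inside an interval of width $O(3^{-k})$, in a form decodable from $d=D+x$ with $D$ unknown --- the full real $d(\tau)-\lfloor d(\tau)\rfloor$ together with $\lfloor D\rfloor$ and a parity bit, and legitimate occurrences of the band must be distinguishable from adversarial initial positions that happen to fall into it. That is the entire mathematical content of the theorem, and your sketch defers it to ``delicate case analysis.'' The paper does this job with a single probe: it constructs two disjoint, order-preserving injections $h_{\rm e},h_{\rm o}\colon[0,\tfrac{1}{2}]\to\mathbf{T}$ (binary-to-ternary digit transcription on even/odd positions, the choice of map recording the parity of $\lfloor d(\tau)\rfloor$), encodes $\tfrac{d(\tau)-\lfloor d(\tau)\rfloor}{2}$ into the fractional offset from the left end, uses $D\notin\mathbf{T}_{\rm ex}$ only to keep legitimate data out of the encoding's own range $\mathbf{T}$, and uses a contraction estimate (Lemma~\ref{lem:contract1/2}) to force geometric escape from spurious band hits. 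Once you are forced to build such an injective, collision-free, decodable encoding already at depth $1$, the whole $k$-cascade becomes superfluous; without it, your reject/advance mechanism has nothing to transport from one depth to the next.
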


\subsection{Proof of Theorem~\ref{apx-thm:nonCantor}}
 As a preliminary step of the Proof of Theorem~\ref{apx-thm:nonCantor}, 
  we introduce some notations. 
 For convenience, let $\epsilon = 3^{-2}$. 
 Let $x \in [0,\tfrac{1}{2}]$ be a real given by 
\begin{eqnarray}
 x = \sum_{i=1}^{\infty} \beta_i 2^{-i}
\end{eqnarray}
 with binary coefficients $\beta_i \in \{0,1\}$ $(i=1,2,\ldots)$, 
 where we employ $.01\dot{0}$ instead of $.00\dot{1}$ if $x \leq \tfrac{1}{4}$ 
 while we employ $.01\dot{1}$ instead of $.1\dot{0}$ if $x > \tfrac{1}{4}$; 
 More formally, 
  suppose for $x = \sum_{i=1}^{\infty} \beta_i 2^{-i}$ 
   that $k \in \mathbb{Z}_{>0}$ exists such that $\beta_k=1$ and $\beta_j=0 $ for any $j>k$,  
  then 
   $x = \sum_{i=1}^{\infty} \beta'_i 2^{-i}$ also holds 
  where $\beta'_i = \beta_i$ for $i<k$, $\beta'_k=0$ and $\beta'_j=1$ for $j>k$. 
 If $x \leq \tfrac{1}{4}$ we choose the coefficients $\beta_i$ ($i=1,2,\ldots$), 
 otherwise we choose $\beta'_i$ ($i=1,2,\ldots$) for $x > \tfrac{1}{4}$. 
 We remark that $\beta_1=0$ when $x \leq 1/4$, and that $\beta'_1=0$ when $x \leq 1/2$. 
 Then, we define a map $h_{\rm e} \colon [0,\frac{1}{2}] \to \mathbf{T}$ by 
\begin{eqnarray}
 h_{\rm e}(x) 
  = \epsilon \sum_{i=1}^{\infty} \gamma_i 3^{-i}
  = \sum_{i=1}^{\infty} \gamma_i 3^{-i-2}
\end{eqnarray}
 where 
\begin{eqnarray*}
 \gamma_i = 
 \begin{cases}
 2 & (\mbox{if $i=2j$ and $\beta_j=1$ for $j \geq 2$}) \\
 0 & (\mbox{otherwise})
 \end{cases}
\end{eqnarray*}
 for $i=1,2,\ldots$ if $x \leq \tfrac{1}{4}$, and 
\begin{eqnarray*}
 \gamma_i = 
 \begin{cases}
 0 & (\mbox{if $i=2j$ and $\beta'_j=0$ for $j \geq 2$}) \\
 2 & (\mbox{otherwise})
 \end{cases}
\end{eqnarray*}
 for $i=1,2,\ldots$ if $x > \tfrac{1}{4}$. 
 Similarly, 
  we define a map $h_{\rm o} \colon [0,\frac{1}{2}] \to \mathbf{T}$ by 
\begin{eqnarray}
 h_{\rm o}(x) 
  = \epsilon \sum_{i=1}^{\infty} \gamma_i 3^{-i}
  = \sum_{i=1}^{\infty} \gamma_i 3^{-i-2}
\end{eqnarray}
 where 
\begin{eqnarray*}
 \gamma_i = 
 \begin{cases}
 2 & (\mbox{if $i=2j+1$ and $\beta_j=1$ for $j \geq 2$}) \\
 0 & (\mbox{otherwise})
 \end{cases}
\end{eqnarray*}
 for $i=1,2,\ldots$ if $x \leq \tfrac{1}{4}$, and 
\begin{eqnarray*}
 \gamma_i = 
 \begin{cases}
 0 & (\mbox{if $i=2j+1$ and $\beta_j=0$ for $j \geq 2$}) \\
 2 & (\mbox{otherwise})
 \end{cases}
\end{eqnarray*}
 for $i=1,2,\ldots$ if $x > \tfrac{1}{4}$. 
 It is not difficult to observe that 
  both $h_{\rm e}$ and $h_{\rm o}$ are injective. 
 We remark that $h_{\rm e}(0)=h_{\rm o}(0)=0$, and also 
 that $h_{\rm e}(\tfrac{1}{2})=h_{\rm o}(\tfrac{1}{2})=\epsilon$. 
 For convenience, we define 
\begin{eqnarray*}
 H_{\rm e}
   &=&  \left\{ h_{\rm e}(x) \in \mathbb{R} \mid  \ x \in (0,\tfrac{1}{2}) \right\} \\
 H_{\rm o}
   &=&  \left\{ h_{\rm o}(x) \in \mathbb{R} \mid  \ x \in (0,\tfrac{1}{2}) \right\}. 
\end{eqnarray*}
 Then, it is not difficult to observe that 
  $H_{\rm e} \cap H_{\rm o} = \emptyset$ holds. 
 It is not difficult to see that $h_{\rm e}$ and $h_{\rm o}$ are order preserving, 
  i.e., 
   for any $x,y \in [0,1/4]$ satisfying $x<y$, 
   both $h_{\rm e}(x) < h_{\rm e}(y)$ and $h_{\rm e}(x) < h_{\rm e}(y)$ hold. 
 The following fact is easy from the definitions of 
  $h_{\rm e}(x)$ and $h_{\rm o}(x)$. 
\begin{lemma}\label{lem:contract0}
 For any $x \in (0,\tfrac{1}{2})$, 
  $h_{\rm e}(x) \leq x$ and 
  $h_{\rm o}(x) \leq x$ hold, respectively. 
\qed
\end{lemma}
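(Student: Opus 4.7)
The plan is to split the argument by whether $x\leq 1/4$ or $x>1/4$, the same threshold at which the definitions of $h_{\rm e}$ and $h_{\rm o}$ switch between the $\beta_i$ and $\beta'_i$ binary expansions.

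For the range $x>1/4$, I expect a crude a priori bound to suffice. Every ternary digit $\gamma_i$ is either $0$ or $2$, so summing the geometric series gives $h_{\rm e}(x),\,h_{\rm o}(x)\leq \sum_{i=1}^{\infty} 2\cdot 3^{-i-2}=\epsilon=1/9$. Since $1/9<1/4<x$, the inequality is immediate; this is also consistent with the stated boundary values $h_{\rm e}(1/2)=h_{\rm o}(1/2)=\epsilon$.

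For the more delicate range $x\leq 1/4$, where $\beta_1=0$ and hence $x=\sum_{j\geq 2}\beta_j\,2^{-j}$, I would argue by a term-by-term comparison. The definitions force $\gamma_i\neq 0$ only at indices $i=2j$ (for $h_{\rm e}$) or $i=2j+1$ (for $h_{\rm o}$), with $j\geq 2$ and $\beta_j=1$, contributing $2\cdot 3^{-2j-2}$ and $2\cdot 3^{-2j-3}$ respectively. Pairing each such contribution with the term $2^{-j}$ of $x$ reduces the whole inequality to checking $2\cdot 3^{-2j-2}\leq 2^{-j}$, i.e.\ $2^{j+1}\leq 9^{j+1}$, which is trivial for $j\geq 2$. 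Since $h_{\rm o}(x)$ uses $3^{-2j-3}$ rather than $3^{-2j-2}$, its bound is strictly smaller and automatic.

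The main obstacle is almost entirely bookkeeping: correctly tracking which of $\beta_i$ or $\beta'_i$ is in effect in each regime, and confirming that the boundary $x=1/4$ is covered by the $x\leq 1/4$ case so that no configuration is double-counted or missed. Once the two expansion conventions are unwound, each half of the proof collapses to a one-line geometric-series estimate, and no issue with ambiguity of binary expansions arises because the inequalities are identities of values, independent of the chosen representation.
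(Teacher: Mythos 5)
Your proof is correct; the paper supplies no argument for this lemma (it is asserted as ``easy from the definitions'' and closed immediately), and your two-case argument --- the crude geometric-series bound $h_{\rm e}(x),h_{\rm o}(x)\le\sum_{i\ge1}2\cdot3^{-i-2}=\epsilon=\tfrac19<\tfrac14<x$ for $x>\tfrac14$, and the termwise comparison $2\cdot3^{-2j-2}\le 2^{-j}$ (equivalently $2^{j+1}\le 9^{j+1}$) over the indices $j\ge2$ with $\beta_j=1$ for $x\le\tfrac14$ --- is exactly the natural way to fill the gap. One small caveat: your closing remark that the argument is ``independent of the chosen representation'' is not quite right as stated, since $h_{\rm e}$ and $h_{\rm o}$ genuinely depend on which binary expansion is used (which is precisely why the paper fixes $\beta_i$ for $x\le\tfrac14$ and $\beta'_i$ for $x>\tfrac14$); your proof nevertheless stands because you apply the paper's convention correctly in each regime.
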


 Furthermore, 
  $h_{\rm e}$ and $h_{\rm o}$ are $\epsilon$-contractive in some case, as follows.  
\begin{lemma}\label{lem:contract1/2}
 Let $h(\tfrac{1}{2}) = \epsilon$ 
   (i.e., $h(\tfrac{1}{2}) = h_{\rm e}(\tfrac{1}{2}) = h_{\rm o}(\tfrac{1}{2})$). 
 Then, 
\begin{eqnarray*}
  h(\tfrac{1}{2}) - h_{\rm e}(x) &<& \epsilon (\tfrac{1}{2}-x) \hspace{1em} \mbox{and} \\
  h(\tfrac{1}{2}) - h_{\rm o}(x) &<& \epsilon (\tfrac{1}{2}-x)
\end{eqnarray*}
  hold for any $x \in (\tfrac{1}{4},\tfrac{1}{2})$, respectively.  
\end{lemma}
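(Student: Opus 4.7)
My plan is to make both bounds quantitative by working in the appropriate binary expansion of $x$ and reading off exactly which ternary digits of $h_{\rm e}(x)$ and $h_{\rm o}(x)$ differ from those of $h(\tfrac{1}{2})=\epsilon$. Since $x\in(\tfrac{1}{4},\tfrac{1}{2})$ I would use the $\beta'$-convention of the excerpt, so $x=\sum_{j\ge 1}\beta'_j 2^{-j}$ with $\beta'_1=0$ (from $x\le\tfrac{1}{2}$) and $\beta'_2=1$ (from $x>\tfrac{1}{4}$), and the set $J(x):=\{j\ge 3\mid \beta'_j=0\}$ is nonempty (otherwise $x=\tfrac{1}{2}$). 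Under the same convention $\tfrac{1}{2}=\sum_{j\ge 2}2^{-j}$, so
\[
 \tfrac{1}{2}-x=\sum_{j\in J(x)}2^{-j}.
\]

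Next I would unfold the definitions of $h_{\rm e}$ and $h_{\rm o}$ for the range $x>\tfrac{1}{4}$. At $x=\tfrac{1}{2}$ every $\beta'_j$ with $j\ge 2$ equals $1$, so no $\gamma_i$ satisfies the ``$0$'' clause and all $\gamma_i$ equal $2$, recovering $h(\tfrac{1}{2})=\epsilon$. Replacing $\tfrac{1}{2}$ by $x$ flips exactly one $\beta'_j$ to $0$ for each $j\in J(x)$, and by the indexing in the definitions this changes only the single ternary digit $\gamma_{2j}$ (for $h_{\rm e}$) or $\gamma_{2j+1}$ (for $h_{\rm o}$) from $2$ to $0$; no other digits move. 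Hence
\[
 h(\tfrac{1}{2})-h_{\rm e}(x)=\sum_{j\in J(x)}2\cdot 3^{-2j-2},\qquad
 h(\tfrac{1}{2})-h_{\rm o}(x)=\sum_{j\in J(x)}2\cdot 3^{-2j-3}.
\]

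The conclusion then follows from a term-by-term comparison against $\epsilon(\tfrac{1}{2}-x)=3^{-2}\sum_{j\in J(x)}2^{-j}$: for every $j\ge 3$ (in fact every $j\ge 1$) the inequality $2\cdot 3^{-2j-2}<3^{-2}\cdot 2^{-j}$ is equivalent to the trivial $2^{j+1}<9^j$, and the $h_{\rm o}$ version is even easier since $3^{-2j-3}<3^{-2j-2}$. Because $J(x)$ contains at least one index, summing over $J(x)$ preserves the strict inequality, giving both claimed bounds. The only part that needs any care is the first step, where the two binary conventions and the asymmetric ``$j\ge 2$'' condition in the definitions of $\gamma_i$ have to be aligned so that $J(x)$ indexes exactly the changed ternary digits; once that bookkeeping is pinned down the quantitative inequality is routine.
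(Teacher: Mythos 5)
Your proof is correct and follows essentially the same route as the paper: both arguments rest on the digit correspondence between the binary expansion of $x$ and the ternary expansion of $h_{\rm e}(x)$, $h_{\rm o}(x)$, together with an inequality of the form $3^{-2j}\lesssim 2^{-j}$. The only (cosmetic) difference is that you compute $h(\tfrac{1}{2})-h(x)$ and $\tfrac{1}{2}-x$ exactly as sums over all indices $j$ with $\beta'_j=0$ and compare term by term, whereas the paper isolates only the first such index $k$ and uses the one-sided bounds $\tfrac{1}{2}-x\geq 2^{-k}$ and $h(\tfrac{1}{2})-h_{\rm e}(x)\leq \epsilon\, 3^{-2k+1}$.
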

\begin{proof}
 Let $x=\sum_{i=1}^{\infty} \beta'_i(x) 2^{-i}$. 
 Suppose $k \geq 2$ is the minimum index such that $\beta'_k(x) = 0$, 
  meaning that $\beta'_i(x) = 1$ for $2 \leq i < k$. 
 Then, 
  $\tfrac{1}{2} - x 
   = \sum_{i=2}^{\infty}2^{-i} - \sum_{i=2}^{\infty} \beta'_i 2^{-i} 
   = \sum_{i=2}^{\infty} (1-\beta'_i) 2^{-i} 
   = 2^{-k} + \sum_{i=k+1}^{\infty} (1-\beta'_i) 2^{-i} 
   \geq 2^{-k}$ holds. 
 Let $h_{\rm e}(x) = \epsilon\sum_{i=1}^{\infty} \gamma_i(x) 3^{-i}$. 
 Since $\beta'_i(x) = 1$ for $2 \leq i < k$, 
  $\gamma_i(x) = 2$ for $i < 2k$ 
  by the definition of $h_{\rm e}$. 
 We also remark that 
  $h(\tfrac{1}{2}) 
    = \epsilon \sum_{i=1}^{\infty} 2 \cdotp 3^{-i}$. 
 Then, 
   $h(\tfrac{1}{2}) - h_{\rm e}(x) 
    = \epsilon \sum_{i=1}^{\infty} 2 \cdotp 3^{-i} - \epsilon \sum_{i=1}^{\infty} \gamma_i(x) 3^{-i}
    = \epsilon \sum_{i=1}^{\infty} (2 - \gamma_i(x)) 3^{-i}
    = \epsilon \sum_{i=2k}^{\infty} (2 - \gamma_i(x)) 3^{-i}
    \leq \epsilon \sum_{i=2k}^{\infty} 2 \cdotp 3^{-i}
    = \epsilon 3^{-2k+1}
    < \epsilon 2^{-k} 
    \leq \epsilon (\tfrac{1}{2} - x)$. 
 Thus, we obtain the claim for $h_{\rm e}$. 
 The proof for $h_{\rm o}$ is similar. 
\end{proof}

 Now, we are ready to prove Theorem~\ref{apx-thm:nonCantor}. 
\begin{proof}[Proof of Theorem~\ref{apx-thm:nonCantor}]
 The proof is constructive. 
 For convenience, let 
\begin{eqnarray*}
 \Delta(u) = u - \lfloor u \rfloor 
\end{eqnarray*}
 for any $u \in \mathbb{R}$. 
 We define a transition map $f\colon \mathcal{O} \to [-D,D]$ to solve Problem~\ref{apx-prob:nonCantor} by 
\begin{eqnarray*}
f((L,d))&=&
\begin{cases}
 x+1 
   & \mbox{if $d \in \mathbb{Z}$} \\
 x - \Delta(d) +h_{\rm e}^{-1}(\Delta(d)) +\frac{1}{2} 
   & \mbox{if $\Delta(d) \in H_{\rm e}$, $\lfloor d \rfloor$ is even and $h_{\rm e}^{-1}(\Delta(d)) \not\in \mathbf{T}$} \\
 x - \Delta(d) +h_{\rm e}^{-1}(\Delta(d)) 
   & \mbox{if $\Delta(d) \in H_{\rm e}$, $\lfloor d \rfloor$ is odd and $h_{\rm e}^{-1}(\Delta(d)) \not\in \mathbf{T}$} \\
 x - \Delta(d) +h_{\rm o}^{-1}(\Delta(d)) 
   & \mbox{if $\Delta(d) \in H_{\rm o}$, $\lfloor d \rfloor$ is even and $h_{\rm o}^{-1}(\Delta(d)) \not\in \mathbf{T}$} \\
 x - \Delta(d) +h_{\rm o}^{-1}(\Delta(d)) +\frac{1}{2} 
   & \mbox{if $\Delta(d) \in H_{\rm o}$, $\lfloor d \rfloor$ is odd and $h_{\rm o}^{-1}(\Delta(d)) \not\in \mathbf{T}$} \\
 x - d + \lfloor d \rfloor & \mbox{otherwise}
\end{cases} \\
f((R,d))&=& 
\begin{cases}
 x - \frac{1}{2}
  & \mbox{if $d \in \mathbb{Z}$}\\
 x - 1 + h_{\rm e}(\frac{d-\lfloor d \rfloor}{2})  
  & \mbox{if $d \not\in \mathbb{Z}$ and $\lfloor d \rfloor$ is even}\\
 x - 1 + h_{\rm o}(\frac{d-\lfloor d \rfloor}{2})  
  & \mbox{if $d \not\in \mathbb{Z}$ and $\lfloor d \rfloor$ is odd}
\end{cases}\\
f(\Myquit) &=& x
 \end{eqnarray*}
   in each case of $\phi(D,x) = (L,d)$, $(R,d)$ or $\Myquit$
  for any $x \in [-D,D]$ (see also Algorithm~\ref{alg:nonCantor}).
 It is not difficult to observe that 
   $f$ is a transition map (recall Problem~\ref{prob:original}). 
 For convenience, 
  let $(\SIDE(t),d(t))=\phi(D,x_t)$.

\begin{algorithm}
\caption{($D$ is not a Cantor real)}
\label{alg:nonCantor}
\algsetup{indent=1.5em}
\begin{algorithmic}[1]
\LOOP
\STATE observe $(\SIDE,d)$ or $\Myquit$
	\IF {$\SIDE=L$}
		\IF {$d \in \mathbb{Z}$}
		  \STATE move right by $1$
		\ELSIF{[$d - \lfloor d \rfloor \in H_{\rm e}$] $\wedge$ 
		       [$\lfloor d \rfloor$ is even]  $\wedge$ 
		       [$h_{\rm e}^{-1}(d-\lfloor d \rfloor) \not\in \mathbf{T}$]}
			\STATE move right by 
				$-(d-\lfloor d \rfloor) + h_{\rm e}^{-1}(d-\lfloor d \rfloor) +\tfrac{1}{2}$ 
		\ELSIF{[$d - \lfloor d \rfloor \in H_{\rm e}$] $\wedge$ 
		       [$\lfloor d \rfloor$ is odd]  $\wedge$ 
		       [$h_{\rm e}^{-1}(d-\lfloor d \rfloor) \not\in \mathbf{T}$]}
			\STATE move right by 
				$-(d-\lfloor d \rfloor) + h_{\rm e}^{-1}(d-\lfloor d \rfloor)$ 
		\ELSIF{[$d - \lfloor d \rfloor \in H_{\rm o}$] $\wedge$ 
		       [$\lfloor d \rfloor$ is even]  $\wedge$ 
		       [$h_{\rm o}^{-1}(d-\lfloor d \rfloor) \not\in \mathbf{T}$]}
			\STATE move right by 
				$-(d-\lfloor d \rfloor) + h_{\rm o}^{-1}(d-\lfloor d \rfloor)$ 
		\ELSIF{[$d - \lfloor d \rfloor \in H_{\rm o}$] $\wedge$ 
		       [$\lfloor d \rfloor$ is odd]  $\wedge$ 
		       [$h_{\rm o}^{-1}(d-\lfloor d \rfloor) \not\in \mathbf{T}$]}
			\STATE move right by 
				$-(d-\lfloor d \rfloor) + h_{\rm o}^{-1}(d-\lfloor d \rfloor) +\tfrac{1}{2}$ 
		\ELSE
			\STATE move left by $d- \lfloor d \rfloor$ 
		\ENDIF
	\ELSIF {$\SIDE=R$}
		\IF {$d \in \mathbb{Z}$}
			\STATE move left by $\tfrac{1}{2}$ 
		\ELSIF {$\lfloor d \rfloor$ is even}
 	  	  \STATE move left by $1 - h_{\rm e}(d-\lfloor d \rfloor)$
		\ELSE
  	  	  \STATE move left by $1 - h_{\rm o}(d-\lfloor d \rfloor)$
		\ENDIF
	\ELSE
	  \STATE (i.e., $\Myquit$ is observed) stay there
	\ENDIF
\ENDLOOP
\end{algorithmic}
\end{algorithm}

 To begin with, we claim that we may assume that $x_0 \in [-D,0)$. 
 By the definition of $f$, 
  Walker moves to right with distance at least $1/2$ whenever it observes the right-end, 
  where we remark that 
   $h_{\rm e}(\frac{d-\lfloor d \rfloor}{2})$ and $h_{\rm e}(\frac{d-\lfloor d \rfloor}{2})$ 
   are smaller than $\epsilon < 1/2$. 
 Thus, Walker eventually moves into the segment $[-D,0]$. 
 If $x_0 = 0$, it is trivial. 

 In the following, we are concerned with the case that $x_0 \in [-D,0)$. 
 Firstly, we are concerned with the case that $d(0) \in \mathbb{Z}$. 
 Then Walker moves to right with distance one by a step, and 
  it eventually observes the right-end at time $\tau$ where $x_{\tau} = -D + \lfloor D \rfloor + 1$\footnote{
   Here we remark that $D$ is not an integer 
    since $D - \lfloor D \rfloor \not\in \mathbf{T}$ and $0 \in \mathbf{T}$. 
   However, this is not essential. 
  }. 
 It is not difficult to observe that $d(\tau) = 2D-\lfloor D \rfloor + 1$. 
 Thus, Lemmas~\ref{lem:epsilon-omit-midpoint} and~\ref{lem:epsilon-omit-parity} imply that 
\begin{align}
 &x_{\tau} - 1 + \frac{d(\tau)-\lfloor d(\tau) \rfloor}{2} = 0 
 && (\mbox{if $\lfloor D \rfloor \not\equiv \lfloor d(\tau) \rfloor \pmod{2}$})
 \label{eq:tau1}\\
 &x_{\tau} - 1 + \frac{d(\tau)-\lfloor d(\tau) \rfloor}{2} +\frac{1}{2} = 0 
 && (\mbox{if $\lfloor D \rfloor \equiv \lfloor d(\tau) \rfloor \pmod{2}$})
\label{eq:tau2}
\end{align}
 where $0$ is the goal of the problem. 
 If $d(\tau) \in \mathbb{Z}$ then $x_{\tau+1} = x_{\tau} -\tfrac{1}{2} = 0$ 
   according to the function $f$ by \eqref{eq:tau2}. 
 Otherwise, 
  Walker moves to 
   $x_{\tau+1} = -D + \lfloor D \rfloor + h(\frac{d(\tau) - \lfloor d(\tau) \rfloor}{2})$
   according to $f$, 
  where $h \in \{h_{\rm e},h_{\rm o}\}$ depends on the parity of $\lfloor d(\tau) \rfloor$.  
 At that time, 
  we remark that $x_{\tau+1} < 0$ 
   since $h(\frac{d(\tau) - \lfloor d(\tau) \rfloor}{2}) < \frac{d(\tau) - \lfloor d(\tau) \rfloor}{2}$ 
   by Lemma~\ref{lem:contract0}. 
 We also remark that $\Delta(d(\tau+1)) \in H_{\rm e} \cup H_{\rm o}$. 
 Then, 
  Walker moves to $x_{\tau+2} = 0$ according to $f$ by \eqref{eq:tau1} and \eqref{eq:tau2}. 
 Thus, we obtain the desired situation in this case. 

 If $d(0) \not\in \mathbb{Z}$ and $\Delta(d(0)) \not\in H_{\rm e} \cup H_{\rm o}$, then 
  $d(1) \in \mathbb{Z}$ and the case is easily reduced to the above.

 In the rest of the proof, 
   we are concerned with the remaining case, that is $\Delta(d(0)) \in H_{\rm e} \cup H_{\rm o}$. 
 Firstly, we remark that if $\SIDE(1)=L$, then $\lfloor x_1 \rfloor > \lfloor x_0 \rfloor + 1$ or 
  reduced to the above cases. 
 Thus, we may assume that $x_0 > -D + \lfloor D \rfloor$, and $\SIDE(1)=R$. 
 For convenience, let $x_0 = -D + \lfloor D \rfloor + \delta$, 
   i.e., $\delta = \Delta(d(0)) \in H_{\rm e} \cup H_{\rm o}$ by the hypothesis of the case. 
 We will consider four subcases; those are 
  given by the combination of conditions 
   whether $\delta \in H_{\rm e}$ or $H_{\rm o}$ and 
   whether $\lfloor d(0) \rfloor$ is even or odd, 
  namely we will prove in the following order,  
 (i)   $\delta \in H_{\rm e}$ and $\lfloor d(0) \rfloor$ is odd, 
 (ii)  $\delta \in H_{\rm o}$ and $\lfloor d(0) \rfloor$ is even, 
 (iii) $\delta \in H_{\rm e}$ and $\lfloor d(0) \rfloor$ is even, 
 (iv)  $\delta \in H_{\rm o}$ and $\lfloor d(0) \rfloor$ is odd, 

 (i) Suppose $\delta \in H_{\rm e}$ and $\lfloor d(0) \rfloor$ is odd. 
 Then, 
  $x_1 = x_0 - \delta + h_{\rm e}^{-1}(\delta)$, and 
  $x_2 
    = x_1 - 1 + h(\frac{d(1)-\lfloor d(1) \rfloor}{2})
    = x_0 - \delta + h_{\rm e}^{-1}(\delta)- 1 + h(\frac{d(1)-\lfloor d(1) \rfloor}{2})$ 
   where $h \in \{h_{\rm e},h_{\rm o}\}$ is appropriately chosen. 
 Here we remark that $h_{\rm e}^{-1}(\delta) < \frac{1}{2}$ 
   since $\delta \in H_{\rm e}$ (recall the definition of $H_{\rm e}$), 
  thus 
  $x_2 
    < x_0 - \delta -\frac{1}{2} + h(\frac{d_1-\lfloor d_1 \rfloor}{2}) 
    = -D + \lfloor D \rfloor -\frac{1}{2} + h(\frac{d_1-\lfloor d_1 \rfloor}{2})
    < -D + \lfloor D \rfloor$ 
    where the last inequality follows $h(\frac{d_1-\lfloor d_1 \rfloor}{2}) < \frac{1}{2}$.  
 The case is reduced to the cases above discussed. 
 The case (ii) is proved in a similar way as (i). 

 Cases (iii) and (iv) are simultaneously proved. 
 If $h_{\rm e}(\delta) \leq h_{\rm e}(\frac{1}{4})$ in case of (iii) 
 (resp.,\ $h_{\rm o}(\delta) \leq h_{\rm o}(\frac{1}{4})$ in case of (iv)), 
  we obtain $x_2 < -D + \lfloor D \rfloor$ in a similar way as case (i). 
 Suppose $h_{\rm e}(\delta) > h_{\rm e}(\frac{1}{4})$ in case of (iii). 
  (resp.,\ $h_{\rm o}(\delta) > h_{\rm o}(\frac{1}{4})$ in case of (iv)). 
 For convenience, 
  let $x^* = -D + \lfloor D \rfloor + \epsilon$, 
  where we remark that $\epsilon = h_{\rm e}(\tfrac{1}{2}) = h_{\rm o}(\tfrac{1}{2})$. 
 Then, 
  we claim that 
\begin{eqnarray}
  x^* - x_2 > \epsilon^{-1}(x^* - x_0) = 9(x^* - x_0) > 0 \hspace{2em} \mbox{(recall $\epsilon=3^{-2}$, by definition)}
\label{eq:nonCantor_geometric}
\end{eqnarray}
   holds in both cases (iii) and (iv), 
   where notice that $x^* - x_0 = \epsilon-\delta > 0$. 
 When $\delta \in H_{\rm e}$,  
  according to $f$, 
  $x_1 = x_0 - \delta + h_{\rm e}^{-1}(\delta)+ \frac{1}{2}$, and 
  $x_2 = x_1 - 1 + h(\frac{d(1) - \lfloor d(1) \rfloor}{2}) 
       = x_0 - \delta + h_{\rm e}^{-1}(\delta) - \frac{1}{2} + h(\frac{d(1)- \lfloor d(1) \rfloor}{2}) 
       = -D + \lfloor D \rfloor + h_{\rm e}^{-1}(\delta) - \frac{1}{2} + h(\frac{d(1)- \lfloor d(1) \rfloor}{2})$ 
   with the appropriately $h \in \{h_{\rm e},h_{\rm o}\}$. 
 Thus, 
\begin{align*}
  x^* - x_2 
    &=  h_{\rm e}(\tfrac{1}{2}) - h_{\rm e}^{-1}(\delta) + \tfrac{1}{2} - h(\tfrac{d(1)- \lfloor d(1) \rfloor}{2})  \nonumber\\
    &> \tfrac{1}{2}-h_{\rm e}^{-1}(\delta) 
    && \left(\mbox{since $h_{\rm e}(\tfrac{1}{2}) > h(\tfrac{d(1)- \lfloor d(1) \rfloor}{2})$}\right)\nonumber\\
    &= h_{\rm e}^{-1}(h_{\rm e}(\tfrac{1}{2})) -h_{\rm e}^{-1}(\delta) \nonumber\\
    &> \tfrac{1}{\epsilon}(h_{\rm e}(\tfrac{1}{2})-\delta)
    && (\mbox{by Lemma~\ref{lem:contract1/2}}) \nonumber\\
    &= \tfrac{1}{\epsilon}(x^*-x_0),
\end{align*}
 and we obtain \eqref{eq:nonCantor_geometric} in case of $\delta \in H_{\rm e}$. 
 Similarly, \eqref{eq:nonCantor_geometric} is obtained in case of $\delta \in H_{\rm o}$. 
 These facts inductively imply that there is $t$ such that 
   $x^*-x_{2t} > 9^t(x^*-x_0) > \epsilon$, 
   meaning that we eventually obtain 
   $x_{2t} < -D + \lfloor D \rfloor$ at time $t$ ($t \leq -\log_9 (x^*-x_0) $), 
  unless we obtain $t' < t$ such that $x_{2t'}$ is a solvable points, 
   i.e., $\Delta(d(2t')) \not \in H_{\rm e} \cup H_{\rm e}$ or $h(\Delta(d(2t'))) > \tfrac{1}{4}$.  
 In any case, we obtain that all cases are reduced to a solvable case. 
\end{proof}
\end{document}